\newcommand{\gsnote}[1]{\todo[backgroundcolor=green!20]{[GS] #1}}
\tikzset{>=stealth'} %
\tikzstyle{every picture} = [style=semithick]
\tikzstyle{every node}    = [font=\small]
\tikzstyle{every state}   = [thick, minimum size=1mm, inner sep=2pt]
\spnewtheorem{fact}[theorem]{Fact}{\itshape}{\rmfamily}
\crefname{fact}{Fact}{Facts}
\spnewtheorem*{assumption}{Assumption}{\itshape}{\rmfamily}
\crefname{assumption}{Assumption}{Assumptions}
\spnewtheorem{observation}[theorem]{Observation}{\itshape}{\rmfamily}
\crefname{observation}{Observation}{Observation}
\newcommand{\parname}[1]{\paragraph{#1.}}
\title{%
  On Boundedness Problems for Pushdown Vector Addition Systems%
  \thanks{%
    This work was partially supported by
    ANR project \textsc{ReacHard} (ANR-11-BS02-001).
  }
}
\author{%
    J\'{e}r\^{o}me Leroux\inst{1}
  \and
  Gr\'{e}goire Sutre\inst{1}
  \and
  Patrick Totzke\inst{2}
}
\institute{%
  Univ. Bordeaux \& CNRS, LaBRI, UMR 5800, Talence, France
  \and
  Department of Computer Science, University of Warwick, UK
}
\begin{document}

\maketitle

\begin{abstract}
  We study pushdown vector addition systems,
which are synchronized products of pushdown automata with vector
addition systems.
The question of the boundedness of the reachability set for this
model
can be refined into two decision problems that ask
if infinitely many counter values or stack configurations
are reachable, respectively.
Counter boundedness seems to be the more intricate problem.
We show decidability in exponential time
for one-dimensional systems.
The proof is via a small witness property derived
from an analysis of derivation trees of
grammar-controlled vector addition systems.

\end{abstract}

\section{Introduction}

\label{sec:pdvass}

Pushdown vector addition systems are
finite automata that can independently manipulate a pushdown stack and
several counters. They are defined as
synchronized products of vector addition systems with pushdown automata.
Vector addition systems, shortly \emph{VAS}, are a classical model for
concurrent systems and are  %
computationally equivalent to
Petri nets.
Formally, %
a $k$-dimensional \emph{vector addition system} is a finite set
$\vec{A}\subseteq \setZ^k$ of vectors called \emph{actions}.
Each action $a\in\vec{A}$ induces
a binary relation
$\vstep{\vec{a}}$ over $\N^k$, %
defined by
$\vec{c} \vstep{\vec{a}} \vec{d}$ if $\vec{d} = \vec{c} + \vec{a}$.

A $k$-dimensional \emph{pushdown vector addition system}, shortly \emph{PVAS},
is a tuple $(Q, \Gamma, q_\init, \vec{c}_\init, w_\init, \Delta)$ where
$Q$ is a finite set of \emph{states},
$\Gamma$ is a finite \emph{stack alphabet},
$q_\init \in Q$ is an \emph{initial state},
$\vec{c}_\init \in \setN^k$ is an \emph{initial assignment of the counters},
$w_\init \in \Gamma^*$ is an \emph{initial stack content}, and
$\Delta\subseteq Q\x\setZ^k\times \Op(\Gamma)\x Q$ is a finite set of
\emph{transitions} where $\Op(\Gamma)\eqdef\{\push(\gamma),\pop(\gamma),\nop \mid
\gamma\in\Gamma\}$ is the set of stack operations.
The \emph{size} of VAS, PVAS (and GVAS introduced later) are defined as expected with numbers encoded in binary.

\begin{figure}[h]
\begin{minipage}[c]{.5\textwidth}
  \centering
  \begin{algorithmic}[1]
    \State $x \gets n$
    \Procedure{DoubleX}{}
    \If {$(\star~\wedge~x > 0)$}
        \State $x\gets (x -1)$
        \State \Call{DoubleX}{}
    \EndIf
    \State $x \gets (x +2)$
    \EndProcedure{}
    \end{algorithmic}
  \end{minipage}
  \begin{minipage}[c]{.5\textwidth}
  \centering
  \begin{tikzpicture}[node distance=1.2cm]
      \node[state,initial] (2) {$2$};
        \node[state] (3) [below of=2] {$3$};

        \node[state] (5) [below of=3] {$5$};
        \node[state] (6) [right of=2,xshift=0.5cm] {$6$};
        \node[state] (7) [below of=6] {$7$};
        \node[state] (8) [below of=7] {$8$};

        \draw [->] (2) to[] node {} (3);

        \draw [->] (3) to[] node[right] {$-1$} (5);
        \draw [->] (6) to[] node {} (7);

        \draw [->] (3) to[] node {} (7);
        \draw [->] (5) to[bend left=60] node[left] {$\push(A)$} (2);
        \draw [->] (7) to[] node[left]{$+2$} (8);
        \draw [->] (8) to[bend right=60] node[right] {$\pop(A)$} (6);
    \end{tikzpicture}
  \end{minipage}
  \caption{A PVAS modeling a recursive program.\label{fig:program}}
\end{figure}
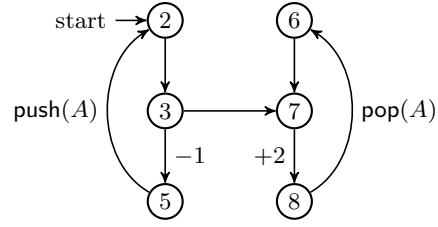

\begin{example}\label{ex:program}
  Consider the program on the left of \cref{fig:program},
  that doubles the value of the global variable $x$.
  The $\star$ expression non-deterministically evaluates to a Boolean,
  as it is often the case in abstraction of programs~\cite{DBLP:conf/pldi/BallMMR01}.
  On the right is a $1$-dimensional PVAS that
  models this procedure: states correspond to lines in the program
  code, operations on the variable $x$ are directly applied, and the call stack
  is reflected on the pushdown stack.
  \qed
\end{example}

The semantics of PVAS is defined as follows.
A \emph{configuration} is a triple $(q,\vec{c},w)\in Q\x\N^k\x\Gamma^*$
consisting of a state, a vector of natural numbers, and a stack content.
The binary \emph{step} relation $\rightarrow$ over configurations is defined by
$(p,\vec{c},u)\rightarrow(q,\vec{d},v)$ if there is a transition
$(p,\op,\vec{a},q)\in\Delta$ such that
$\vec{c}\vstep{\vec{a}}\vec{d}$ and one of the following conditions holds:
either
$\op=\push(\gamma)$ and
$v=u\gamma$,
or
$\op=\pop(\gamma)$ and $u=v\gamma$,
or
$\op=\nop$ and $u=v$.
The reflexive and transitive closure of $\rightarrow$ is denoted by $\xrightarrow{*}$.

\smallskip

The \emph{reachability set} of a PVAS is the set of configurations
$(q,\vec{c},w)$ such that
$(q_\init,\vec{c}_\init,w_\init)\xrightarrow{*}(q,\vec{c},w)$.
The reachability problem asks if a given
configuration $(q,\vec{c},w)$ is in the reachability set
of a given PVAS.
The decidability of
this problem is open. Notice that for vector addition systems, even though the reachability
problem is decidable~\cite{May1981,Kos1982}, no primitive upper bound of complexity is known
(see~\cite{LerouxSchmitz:2015:LICS} for a first upper bound).
However, a variant called the
coverability problem is known to be \EXPSPACE-complete~\cite{Rac1978,Lip1976}. Adapted to
PVAS, the coverability problem takes as input a PVAS
and a state $q\in Q$ and asks if there exists a reachable
configuration of the form $(q,\vec{c},w)$ for some $\vec{c}$ and $w$.
The decidability of the coverability problem for PVAS is also open.
In fact,
coverability and reachability are inter-reducible (in logspace) for
this class~\cite{Laz2013,LerouxSutreTotzke:2015:ICALP}.
In dimension one, we recently proved that coverability is
decidable~\cite{LerouxSutreTotzke:2015:ICALP}.

\smallskip

Both coverability and reachability are clearly decidable for
PVAS with finite reachability sets. These PVAS are
said to be \emph{bounded}.
In~\cite{LPS2014}, this class is
proved to be recursive, i.e. the boundedness problem for PVAS
is decidable. The complexity of this problem is known to be
{\TOWER}-hard~\cite{Laz2013}. The decidability is obtained by observing that if the reachability
set of a PVAS is finite, its cardinality is at most
hyper-Ackermannian in the size of the PVAS. Even though
this bound is tight~\cite{LPS2014}, the exact
complexity of the boundedness problem is still open. Indeed, it is
possible that there exist small certificates that witness infinite reachability
sets. For instance, in the VAS case,
the reachability set can be finite and Ackermannian.
But when it is infinite, there
exist small witnesses of this fact~\cite{Rac1978}.
This yields an optimal~\cite{Lip1976} exponential-space
algorithm for the VAS boundedness problem. Extending this
technique to PVAS is a challenging problem.

\smallskip
The boundedness problem for PVAS can be refined in two
different ways. In fact, the infiniteness of the reachability set may come from
the stack or the counters.
We say that a PVAS is \emph{counter-bounded} if
the set of vectors $\vec{c}\in\setN^k$ such that $(q,\vec{c},w)$ is
reachable for some $q$ and $w$, is finite. Symmetrically, a PVAS is called
\emph{stack-bounded} if the set of words $w\in \Gamma^*$ such that
$(q,\vec{c},w)$ is reachable for some $q$ and $\vec{c}$, is finite.
The
following lemma shows that the two associated decision problems are
at least as hard as the boundedness problem.

\begin{restatable}{lemma}{LemBoundednessReduction}
    \label{lem:BoundednessReduction}
  The boundedness problem is reducible in logarithmic space to the
  counter-boundedness problem and to the stack-boundedness problem (the
  dimension $k$ is unchanged by the reduction).
\end{restatable}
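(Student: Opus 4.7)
The plan is to construct, from any $k$-dimensional PVAS $V$, two $k$-dimensional PVAS $V_c$ and $V_s$ in logspace such that $V$ is bounded iff $V_c$ is counter-bounded, and $V$ is bounded iff $V_s$ is stack-bounded. The key observation is that a PVAS is bounded iff it is simultaneously counter-bounded and stack-bounded (since the state set is finite), so the reduction amounts to forcing every potential source of unboundedness---the stack and every counter---to manifest itself as unboundedness of a single observable quantity: counter $1$ for $V_c$, and the stack for $V_s$.

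For $V_c$ I would take a disjoint copy $Q'$ of the state set $Q$ and add a $\nop$ transition with action $\vec{0}$ from each $q \in Q$ to its copy $q' \in Q'$, so that the PVAS can nondeterministically enter the ``drain phase'' at any moment. On $Q'$ only drain self-loops are added, and no transition leaves $Q'$. The drain self-loops are of two kinds: for each $\gamma \in \Gamma$ and each $q' \in Q'$, a $\pop(\gamma)$ self-loop at $q'$ whose action is $+1$ on counter $1$ and $0$ elsewhere; and for each $i \in \{2, \ldots, k\}$ and each $q' \in Q'$, a $\nop$ self-loop whose action is $-1$ on counter $i$ and $+1$ on counter $1$. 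In effect, the drain phase replaces stack symbols and units of counters $2, \ldots, k$ by units of counter~$1$.

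To verify the equivalence: if $V$ is bounded by some constant $B$, then counters and stack height are $\leq B$ in the normal phase of $V_c$, and in the drain phase counter $1$ can grow by at most the stack height plus the sum of counters $2, \ldots, k$ at drain entry, i.e.\ by at most $kB$, while counters $2, \ldots, k$ only decrease; hence $V_c$ is counter-bounded. Conversely, if $V$ is unbounded then either the stack or some counter is unbounded in $V$. Reaching a configuration of $V$ where the offending quantity exceeds an arbitrary prescribed value and then draining it entirely into counter $1$ witnesses that counter $1$ takes arbitrarily large values in $V_c$.

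The construction of $V_s$ is symmetric: introduce a fresh stack symbol $\$$, take the same copy $Q'$ with the same $\nop$ transitions from $Q$ to $Q'$, and on $Q'$ place, for each $i \in \{1, \ldots, k\}$ and each $q' \in Q'$, a $\push(\$)$ self-loop whose action is $-1$ on counter $i$. No $\pop$ transitions are added, so the stack can only grow during drain. The same case analysis yields $V$ bounded iff $V_s$ stack-bounded. Both constructions add $O(|Q| + |\Gamma| + k)$ new states and transitions of constant size each, so they are straightforwardly producible in logarithmic space, and the dimension $k$ is untouched. The only subtle point---and the main thing to argue carefully---is to check that during drain the \emph{non-observed} quantities genuinely stay bounded (counters $2, \ldots, k$ and the stack for $V_c$; all counters for $V_s$), which holds because drain transitions only ever decrease them.
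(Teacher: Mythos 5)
Your construction is correct and rests on the same idea as the paper's: add a drain phase reachable by a $\nop$ transition from every original state, where the potential sources of unboundedness are funneled into a single observable quantity, and then verify that boundedness is preserved in the forward direction and that any unbounded quantity in $V$ can be flushed into the observable one in the backward direction. The only notable difference is that the paper builds a \emph{single} auxiliary system $A'$ with one new sink state $\perp$ carrying both families of self-loops (pop any $\gamma$ while incrementing counter $1$, and decrement any counter while pushing a symbol), and shows that $A'$ simultaneously serves as the reduction instance for both counter- and stack-boundedness. You instead build two separate systems $V_c$ and $V_s$, each with only one drain direction, and you also duplicate the whole state set $Q$ rather than adding a single sink; both choices are harmless but slightly heavier. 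Your split has the small advantage that the forward direction is trivial to check (the drain transitions are monotone in the right direction, no conservation argument needed), whereas the paper's unified $\perp$ requires the short observation that the sum of counter $1$, the other counters, and the stack height is conserved in the drain phase. One small inaccuracy: you state that you add $O(|Q|+|\Gamma|+k)$ new transitions, but the drain self-loops number $\Theta(|Q|\cdot(|\Gamma|+k))$; this does not affect the logspace claim.
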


The stack-boundedness problem can be solved by adapting the algorithm
introduced in~\cite{LPS2014}
for the PVAS boundedness problem.
Informally, this algorithm explores
the reachability tree and stops as
soon as it detects a cycle of transitions whose iteration
produces infinitely many reachable configurations.
If this cycle increases the stack, we can immediately conclude
stack-unboundedness.
Otherwise, at least one counter can be increased to
an arbitrary large number. By replacing the value of this counter by
$\omega$ and then resuming the computation of the tree
from the new (extended) configuration, we obtain a Karp\&Miller-like
algorithm~\cite{DBLP:journals/jcss/KarpM69} deciding
the stack-boundedness problem. We deduce the following
result.
\begin{lemma}
  The stack-boundedness problem for PVAS is decidable.
\end{lemma}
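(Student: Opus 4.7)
The plan is to adapt the Karp--Miller construction along the lines already used for the PVAS boundedness algorithm of \cite{LPS2014}, refining the acceleration rule so as to distinguish stack growth from counter growth. First I would build a finitely-branching tree whose nodes carry extended configurations $(q,\vec{c},w)\in Q\x(\N\cup\{\omega\})^k\x\Gamma^*$, where the symbol $\omega$ marks a counter coordinate known to reach arbitrarily large values. Expansion of a node applies every enabled transition, arithmetic on $\omega$ being absorbing as in the classical algorithm.

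At each freshly generated node $n$ I would search along its branch for an ancestor $n'$ together with a witnessing sub-run $n'\xrightarrow{*}n$ falling into one of two cases. \emph{Case (i), stack-extending loop:} $n'=(q,\vec{c}',w')$ and $n=(q,\vec{c},w'u)$ with $u\in\Gamma^+$ and $\vec{c}\geq\vec{c}'$; then iterating the sub-run produces the reachable stacks $w',w'u,w'uu,\ldots$, so the procedure halts and reports \emph{unbounded}. \emph{Case (ii), counter-only loop:} $n'=(q,\vec{c}',w')$ and $n=(q,\vec{c},w')$ with $\vec{c}\geq\vec{c}'$ and strict inequality on at least one finite coordinate; then the sub-run can be iterated from $n'$ keeping the stack at $w'$ while driving that coordinate to $\omega$, so I would relabel $n$ accordingly and continue exploration. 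If neither case applies and $n$ is already dominated by an ancestor, I close the branch.

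Soundness is straightforward: any configuration appearing below an $\omega$-promotion corresponds to a genuinely reachable configuration of the original PVAS in which the promoted coordinate takes arbitrarily large values, so the set of reachable stacks of the system is contained in the projection of the abstract tree onto $\Gamma^*$. For completeness, if infinitely many stacks are reachable then along some infinite branch of the exact reachability tree, Higman's lemma on $\Gamma^*$ combined with Dickson's lemma on counter vectors and the finiteness of $Q$ forces an ancestor--descendant pair to fit case (i). The delicate point, which I expect to be the main obstacle, is to turn a generic subword witness into the suffix witness required by case (i); this is done by cutting the sub-run at its lowest stack level, so that the portion of the stack originally below the current top stays untouched and the net action of the loop is a push of $u$ on top. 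The same well-quasi-order argument bounds the length of every branch not triggering case (i), and since case (ii) fires at most $k$ times on any branch, the whole tree is finite and the algorithm terminates.
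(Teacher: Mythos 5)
Your high-level plan matches the paper's: both adapt the Karp\&Miller-style exploration of \cite{LPS2014}, accelerating counter coordinates to $\omega$ on counter-increasing cycles and halting as soon as a stack-increasing cycle is detected. The paper only sketches this in a single paragraph and defers the hard work (termination, the precise notion of ``iterable cycle'') to \cite{LPS2014}, so your level of informality is comparable.

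There is, however, a genuine soundness gap in your case (i) as stated. Knowing that $n'=(q,\vec{c}',w')$, $n=(q,\vec{c},w'u)$ with $u\in\Gamma^+$ and $\vec{c}\geq\vec{c}'$ is \emph{not} sufficient to iterate the sub-run $n'\xrightarrow{*}n$. If the sub-run pops below stack height $\len{w'}$ and later re-pushes, the symbols it reads at those depths come from $w'$ on the first pass but from (a shift involving) $u$ on the second, so the second pass need not even be enabled. Concretely, the three-step loop $\pop(\gamma)\,;\push(\gamma)\,;\push(\delta)$ takes $(q,\vec{0},\gamma)$ to $(q,\vec{0},\gamma\delta)$, fitting your case (i) with $u=\delta$, yet it cannot be repeated because the next $\pop(\gamma)$ sees $\delta$ on top. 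The acceleration rule must therefore additionally require that the sub-run from $n'$ to $n$ never goes strictly below stack height $\len{w'}$. Your ``cut at lowest stack level'' remark is indeed aimed at this constraint, but you invoke it only in the completeness argument, not in the rule that triggers the \emph{unbounded} verdict, which is where it is needed for soundness. A second, related weakness: after cutting at the minimum, the cut point's state and counter vector need not dominate those at $n$, so a single cut does not by itself produce a valid case-(i) pair; one must reapply pigeonhole/Dickson over the minima, which is exactly the nested-sequence structure that \cite{LPS2014} introduces and bounds (Theorem~VI.1 there). Finally, ``the same well-quasi-order argument bounds the length of every branch'' is imprecise: wqo arguments give no length bound; the bound comes from the length-function theorem for bad nested sequences in \cite{LPS2014}.
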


Concerning the counter-boundedness problem, adapting the algorithm
introduced in~\cite{LPS2014} in a similar way
seems to be more involved. Indeed, if we detect a cycle that
only increases the stack, we can iterate it and represent its effect
with a regular language. However, we do not know how to effectively truncate the
resulting tree to obtain an algorithm deciding the counter-boundedness problem.

\parname{Contributions}
In this
paper we solve the counter-boundedness problem for the special case of
dimension one. %
We show that in a grammar setting,
PVAS counter-boundedness corresponds
to the boundedness problem for 
prefix-closed, grammar-controlled vector addition systems.
We show that in dimension one, this problem is decidable
in exponential time.
Our proof is based on
the existence of small witnesses exhibiting the unboundedness
property. This complexity
result improves the best known upper bound for the classical
boundedness problem for PVAS in dimension one. In fact, as
shown by the following \cref{ex:ack}, the reachability set of a bounded
$1$-dimensional PVAS can be Ackermannian large.
In particular,
the worst-case running time of the algorithm introduced in~\cite{LPS2014}
for solving the boundedness problem is at least Ackermannian even in
dimension one.

\begin{figure}[h]
  \centering
  \begin{tikzpicture}[text centered,->,bend angle=23]
      \node[state] (init) {$\perp$};
      \node[state] (0) [above of=init, node distance=1.75cm] {$0$};
      \node[state] (m) [right of=init, node distance=3.5cm]  {$m$};
      \node[state] (1) [left  of=init, node distance=3.5cm]  {$1$};

      \draw (init) to[bend left] node[left, pos=0.7] {$\pop(\gamma_0)$} (0);
      \draw (0)    to[bend left] node[right,pos=0.3] {$1$} (init);

      \draw (init) to[bend right] node[above, yshift=-0.5mm] {$\pop(\gamma_1)$} (1);
      \draw (1)    to[bend right] node[below] {$1$} node[above] {$\push(\gamma_{0})$} (init);

      \draw (init) to[bend left] node[above, yshift=-0.5mm] {$\pop(\gamma_m)$} (m);
      \draw (m)    to[bend left] node[below] {$1$} node[above, yshift=0.25mm] {$\push(\gamma_{m-1})$} (init);

      \draw[-, loosely dotted, bend angle=40, bend right] (1) to (m);

      \draw (1) to[loop left,looseness=10,out=150,in=-150] node[left]
      {$\begin{array}{@{}c@{}}\push(\gamma_{0})\\-1\end{array}$} (1);

      \draw (m) to[loop right,looseness=10,out=30,in=-30] node[right]
      {$\begin{array}{@{}c@{}}\push(\gamma_{m-1})\\-1\end{array}$} (m);
  \end{tikzpicture}
  \caption{One-dimensional PVAS that weakly compute Ackermann functions.\label{fig:ack}}
\end{figure}
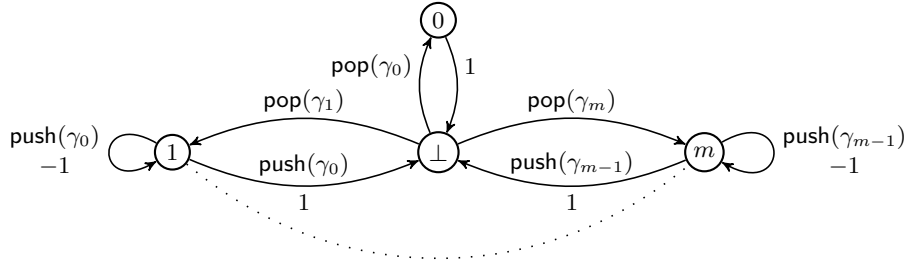

\begin{example}\label{ex:ack}

  The Ackermann functions $A_m:\setN\to\setN$, for $m\in\setN$, are defined by
  induction for every $n\in\setN$ by:
  $$A_m(n)\ \eqdef \ 
  \begin{cases}
    n+1 & \text{ if }m=0\\
    A_{m-1}^{n+1}(1) & \text{ if }m>0\\
  \end{cases}
  $$
  These functions are \emph{weakly computable} by the (family of) PVAS depicted in \cref{fig:ack},
  in the sense that:
  \begin{equation}
    \label{eq:ack-pvas}
    A_m(n) \ = \ 
    \max\{c \mid (\perp,n,\gamma_m)\xrightarrow{*}(\perp,c,\varepsilon)\}
  \end{equation}
  for every $m, n\in\setN$.
  Indeed, an immediate induction on $k\in\{0,\ldots,m\}$ shows that
  $(\perp,c,\gamma_k)\xrightarrow{*}(\perp,A_k(c),\varepsilon)$
  for every $c\in\setN$.
  For the converse inequality,
  let us introduce, for each configuration $(\perp,c,w)$,
  the number $\theta(c,w)$ defined by
  $$
  \theta(c, \gamma_{i_1}\cdots\gamma_{i_k})
  \ \eqdef\ 
  A_{i_1}\circ\cdots\circ A_{i_k}(c)
  $$
  An immediate induction on the number of times a run come back to the
  state $\perp$ shows that $(\perp,c,w)\xrightarrow{*}(\perp,c',w')$
  implies $\theta(c,w) \geq \theta(c',w')$.
  Since $\theta(c,\varepsilon) = c$,
  we derive that $A_m(n) \geq c$ for every $c$ such that
  $(\perp,n,\gamma_m)\xrightarrow{*}(\perp,c,\varepsilon)$.
  This concludes the proof of \cref{eq:ack-pvas}.

  \smallskip

  Notice that the reachability set of this PVAS is finite
  for any initial configuration.
  Indeed,
  $(\perp,c,w)\xrightarrow{*}(\perp,c',w')$
  implies $\theta(c,w) \geq \theta(c',w') \geq c' + \len{w'}$.
  Therefore, there are only finitely many reachable configurations in state $\perp$.
  It follows that the same property holds for the other states.
  \qed
\end{example}

\parname{Outline}
We recall some necessary notations about context-free
grammars and parse trees in the next section.
In \Cref{sec:model}, we present
the model of grammar-controlled vector addition
systems (GVAS) as previously introduced in~\cite{LerouxSutreTotzke:2015:ICALP},
and reduce the counter boundedness problem for PVAS
to the boundedness problem for the subclass of \emph{prefix-closed} GVAS.
We show in \cref{sec:certificates} that unbounded systems
exhibit certificates of a certain form.
\Cref{sec:struct} proves a technical lemma used later on
and finally,
in \cref{sec:small_certs},
we bound the size of minimal certificates %
and derive the claimed exponential-time upper bound.

\section{Preliminaries}
\label{sec:preliminaries}
We let $\setZbar \eqdef \setZ \cup \{-\infty, +\infty\}$ denote the
extended integers, and we use the standard extensions
of $+$ and $\leq$ to $\setZbar$.
Recall that $(\setZbar, \leq)$ is a complete lattice.

\parname{Words}
Let $A^*$ be the set of all finite \emph{words} over the alphabet $A$.
The \emph{empty word} is denoted by $\eps$.
We write $\len{w}$ for the \emph{length} of a word $w$ in $A^*$
and $w^k \eqdef w w \cdots w$ for its $k$-fold concatenation.
The \emph{prefix} partial order $\prefix$ over words
is defined by $u \prefix v$ if $v=uw$ for some word $w$.
We write $u\pprefix v$ if $u$ is a proper prefix of $v$.
A \emph{language} is a subset $L\subseteq A^*$.
A language $L$ is said to be \emph{prefix-closed} if
$u \prefix v$ and $v \in L$ implies $u \in L$.
\parname{Trees}
A \emph{tree} $T$ is a finite, non-empty, prefix-closed subset of $\setN^*$ satisfying the
property that if $tj$ is in $T$ then $ti$ in $T$ for all $i<j$.
Elements of $T$ are called \emph{nodes}.
Its \emph{root} is the empty word $\eps$.
An \emph{ancestor} of a node $t$ is a prefix $s \prefix t$.
A \emph{child} of a node $t$ in $T$ is a node $tj$ in $T$ with $j$ in
$\setN$.
A node is called a \emph{leaf} if it has no child (i.e., $t0 \not\in T$),
and is said to be \emph{internal} otherwise.
The \emph{size} of a tree $T$ is its cardinal $\card{T}$,
its \emph{height} is the maximal length $|t|$ of its nodes $t\in T$.
We let $\lex$ denote the lexicographic order on words in
$\setN^*$.

\parname{Context-free Grammars}
A \emph{context-free grammar} is a quadruple $G=(V,A,R,S)$,
where $V$ and $A$ are disjoint finite sets of
\emph{nonterminal} and \emph{terminal} symbols,
$S \in V$ is a \emph{start symbol},
and $R\subseteq V\times (V\cup A)^*$ is a finite set of
\emph{production rules}.
We write
\begin{equation*}
  X \pstep \alpha_1 \mid \alpha_2 \mid \ldots \mid \alpha_k
\end{equation*}
to denote that $(X,\alpha_1),\ldots,(X,\alpha_k) \in R$.
For all words
$w,w' \in (V\cup A)^*$,
the grammar admits a \emph{derivation step}
$w\gstep{}w'$
if there exist two words $u,v$ in $(V\cup A)^*$ and a
production rule $(X,\alpha)$ in $R$
such that $w= uXv$ and $w'=u\alpha v$.
Let $\gstep{*}$ denote the reflexive and transitive closure of
$\gstep{}$.
The \emph{language} of a word $w$ in $(V\cup A)^*$
is the set $\lang[^G]{w} \eqdef \{z\in A^*\mid w\gstep{*}z\}$.
The \emph{language} of $G$ is defined as $\lang[^G]{S}$, and it is
denoted by $\lang[^G]{}$.
A nonterminal $X\in V$ is called \emph{productive} if $\lang[^G]{X}\neq\emptyset$.
A context-free grammar $G=(V,A,R,S)$ is in \emph{Chomsky normal form}%
\footnote{To simplify the presentation, we consider a weaker normal
  form than the classical one, as we allow to reuse the start symbol.} 
if, for every production rule $(X, \alpha)$ in $R$,
either $(X, \alpha) = (S, \varepsilon)$ or $\alpha \in V^2 \cup A$.

\parname{Parse Trees}
A \emph{parse tree} for a context-free grammar $G=(V,A,R,S)$
is a tree $T$ equipped with a labeling function
$\lsymoperator:T\to (V \cup A \cup \{\varepsilon\})$
such that the root is labeled by $\lsym{\varepsilon} = S$
and $R$ contains the production rule
$\lsym{t} \pstep \lsym{t0} \cdots \lsym{tk}$ for every internal node $t$
with children $t0, \ldots, tk$.
In addition,
each leaf $t\not=\varepsilon$ with $\lsym{t} = \varepsilon$ is the only child of its parent.
Notice that $\lsym{t} \in V$ for every internal node $t$.
A parse tree is called \emph{complete} when
$\lsym{t} \in (A \cup \{\varepsilon\})$ for every leaf $t$.
The \emph{yield} of a parse tree $(T, \lsymoperator)$ is the word
$\lsym{t_1} \cdots \lsym{t_\ell}$ where
$t_1, \ldots, t_\ell$ are the leaves of $T$ in lexicographic order
(informally, from left to right).
Observe that for every word $w$ in $(V\cup A)^*$,
it holds that $S \gstep{*} w$ if, and only if,
$w$ is the yield of some parse tree.

\section{Grammar-Controlled Vector Addition Systems}
\label{sec:model}

In this section
we recall the notion
of GVAS
from
\cite{LerouxSutreTotzke:2015:ICALP}
and show that the
boundedness
problem for the subclass of \emph{prefix-closed} GVAS
is inter-reducible
to the counter-boundedness problem for
pushdown vector addition systems.

\begin{definition}[GVAS]
  A $k$-dimensional \emph{grammar-controlled vector addition system}
  (shortly, \emph{GVAS})
  is a tuple
  $G=(V,\vec{A},R,S,\vec{c}_\init)$ where
  $(V,\vec{A},R,S)$ is a context-free grammar,
  $\vec{A} \subseteq \Z^k$ is a VAS,
  and $\vec{c}_\init\in\setN^k$ is an initial vector.
\end{definition}

The semantics of GVAS is given by extending the relations
$\vstep{\vec{a}}$ of ordinary VAS %
to words
over $V\cup \vec{A}$ as follows.
Define $\vstep{\eps}$ to be the identity on $\N^k$ and let
$\vstep{z \vec{a}} {\eqdef} \vstep{\vec{a}} \circ \vstep{z}$
for $z \in \vec{A}^*$ and $\vec{a} \in \vec{A}$.
Finally,
let $\vstep{w} {\eqdef} \,\bigcup_{z \in \lang[^G]{w}} \vstep{z}$
for $w\in (V\cup \vec{A})^*$.
For a word $z=\vec{a}_1\vec{a}_2\cdots\vec{a}_n\in\vec{A}^*$ over the terminals,
we shortly write
$\sum z$ for the sum
$\sum_{i=1}^n a_i$.
Observe that
$\vec{c}\vstep{z}\vec{d}$ implies $\vec{d}-\vec{c}=\sum z$.

\medskip
Ultimately, we are interested in the relation $\vstep{S}$,
that describes the reachability relation via sequences of actions
in $\lang[^G]{S}$, i.e., those that are derivable from the starting symbol $S$
in the underlying grammar.
A vector $\vec{d} \in \setN^k$ is called \emph{reachable} from
a vector $\vec{c} \in \setN^k$ if $\vec{c} \vstep{S} \vec{d}$.
The
\emph{reachability set} of a GVAS is the set of vectors reachable from
$\vec{c}_\init$.

\medskip
A GVAS is said to be \emph{bounded} if its reachability set is
finite. The associated boundedness problem for GVAS is
challenging since the coverability
problem for PVAS,
whose decidability is still open,
is logspace reducible to it. However,
the various boundedness properties that we investigate on PVAS (see \cref{sec:pdvass})
consider all reachable configurations, without any acceptance condition.
So they intrinsically correspond to context-free languages
that are prefix-closed.
It is therefore natural to consider the same restriction for GVAS.
Formally,
we call a GVAS $G=(V,\vec{A},R,S,\vec{c}_\init)$ \emph{prefix-closed} when
the language $\lang[^G]{S}$ is prefix-closed.
Concerning the counter-boundedness problem for PVAS, the following
lemma shows that it is sufficient to consider the special case of prefix-closed GVAS.

\begin{restatable}{lemma}{LemPVAStoGVAS}
    \label{lem:PVAS_to_GVAS}
  The counter-boundedness problem for PVAS is logspace inter-reducible with
  the prefix-closed GVAS boundedness problem
  (the dimension $k$ is unchanged by both reductions).
\end{restatable}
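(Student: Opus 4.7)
The plan is to establish two logspace reductions, one for each direction, preserving the dimension $k$.

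For the reduction from PVAS counter-boundedness to prefix-closed GVAS boundedness, given a PVAS $P = (Q, \Gamma, q_\init, \vec{c}_\init, w_\init, \Delta)$, I would construct a prefix-closed GVAS $G$ whose terminal alphabet $\vec{A}$ is the set of VAS action vectors appearing in $\Delta$, and whose language $\lang[^G]{S}$ is precisely the set of action sequences labelling some partial run of $P$ from $(q_\init, \vec{c}_\init, w_\init)$, ignoring counter nonnegativity. Such a language is automatically prefix-closed, and the GVAS semantics $\vstep{w}$ re-imposes counter nonnegativity, so the reachability set of $G$ will coincide with the set of counter vectors reachable in $P$. After a standard preprocessing that prepends a bottom-of-stack marker $\bot$ and pushes $w_\init$ onto the stack via zero-cost transitions from a new initial state $q_0$, the grammar can be built by a variant of the classical triple construction: nonterminals $[p,\gamma,q]$ for \emph{complete} stack-popping runs (as usual), together with auxiliary \emph{open} nonterminals $\langle p, \gamma\rangle$ whose productions generate the action sequences of partial runs starting at $(p, u\gamma)$ that never pop below $\gamma$. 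Productions directly mirror $\Delta$, and the start symbol simply derives to $\langle q_0, \bot\rangle$. The construction is clearly in logspace.

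For the converse reduction, given a prefix-closed GVAS $G = (V, \vec{A}, R, S, \vec{c}_\init)$, I would build a PVAS $P$ that simulates leftmost derivations of $G$ on its stack. Its stack alphabet is $V \cup \vec{A}$, its initial stack is $S$, and its transitions either expand the topmost nonterminal $X$ using a production $X \pstep \alpha$ (popping $X$ and pushing the symbols of $\alpha$ in reverse via auxiliary control states) or consume a topmost terminal $\vec{a}$ by popping it and adding $\vec{a}$ to the counters. A run of $P$ then corresponds exactly to a leftmost derivation of $G$ in progress, so the set of counter vectors reachable in $P$ equals $\{\vec{d} : \vec{c}_\init \vstep{w} \vec{d} \text{ for some prefix } w \text{ of a word in } \lang[^G]{S}\}$. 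The prefix-closed hypothesis on $G$ ensures every such prefix $w$ lies in $\lang[^G]{S}$ itself, so this set coincides with the reachability set of $G$.

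The main obstacle will be the first direction: carefully setting up the auxiliary \emph{open} nonterminals $\langle p, \gamma\rangle$ and their productions so that \emph{every} partial run of $P$ — not just those that eventually empty the stack — is generated, thereby simultaneously securing prefix-closedness of $\lang[^G]{S}$ and its exact correspondence with PVAS counter reachability. The second direction is more routine; its correctness rests on the observation that the prefix-closed hypothesis identifies the set of intermediate counter vectors along a derivation with the set of final counter vectors of some (shorter) derivation.
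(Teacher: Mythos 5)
Your proposal is correct and takes essentially the same route as the paper's proof: both directions exploit the standard pushdown-automaton/context-free-grammar correspondence, with the forward direction constructing a grammar for the (automatically prefix-closed) set of action sequences of all partial runs of the PVAS, and the backward direction simulating leftmost derivations of the grammar on the PVAS stack. You flesh out the construction far more than the paper's two-sentence sketch; in particular, your explicit treatment of the open nonterminals $\langle p,\gamma\rangle$ generating partial-run traces supplies exactly the detail the paper suppresses when it writes that a PVAS recognizes a prefix-closed trace language and that a prefix-closed CFG can be accepted by a PDA with all states accepting.
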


In this paper,
we focus on the counter-boundedness problem for PVAS of dimension one.
We show that this problem is decidable in exponential time.
The proof is by reduction, using \cref{lem:PVAS_to_GVAS},
to the boundedness problem for prefix-closed $1$-dimensional GVAS.
Our main technical contribution is the following result.

\begin{theorem}
    \label{thm:main}
    The prefix-closed $1$-dimensional GVAS boundedness problem is decidable in
    exponential time.
\end{theorem}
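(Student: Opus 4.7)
The plan is to prove \cref{thm:main} via a small-witness property: whenever a prefix-closed $1$-dimensional GVAS is unbounded, unboundedness is witnessed by a certificate whose size is bounded by an exponential function of the input, and enumerating and checking such certificates can then be done in exponential time.

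The first step is to normalize the grammar to Chomsky normal form at polynomial cost, preserving the language and hence prefix-closure, and then to formalize the notion of certificate. A natural shape for a certificate of unboundedness is a triple of parse trees witnessing derivations $S \gstep{*} u_0\, X\, v_0$, $X \gstep{*} \alpha\, X\, \beta$, and $X \gstep{*} w$, where $u_0,\alpha,\beta,w$ are terminal words over $\vec{A}$ and $v_0$ is productive. If the pump has strictly positive total effect and the counter starting from $\vec{c}_\init$ remains non-negative along every prefix of the unfolded word $u_0\,\alpha^n\,w$ for all $n$, then by prefix-closure of $\lang[^G]{S}$ every such prefix lies in the language and the reachable counter value $\vec{c}_\init + \sum(u_0\alpha^n w)$ grows without bound. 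I would formalize certificates along these lines building on \cref{sec:certificates}.

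The main step, and the principal obstacle, is the converse: proving that an unbounded system always admits such a certificate of at most exponential size. The $1$-dimensional setting is essential here because counter values are linearly ordered. The strategy is to start from an arbitrary witness (an infinite family of executable terminal words with unbounded sum), extract a pumpable nonterminal via standard pumping on parse trees, and then repeatedly shrink the three parse trees using a cut-and-paste argument while preserving executability. Minimality should force each parse tree to have height polynomial in $|V|$ and hence yield of at most singly-exponential length, with numerical effects representable in polynomially many bits. The delicate point, for which the technical lemma from \cref{sec:struct} is intended, is reconciling the positive pump effect with the prefix non-negativity constraint: the reservoir of counter units accumulated by executing $u_0$ must absorb every negative excursion inside $\alpha^n$, and one must show that a polynomial-bit reservoir always suffices; without such a guarantee, minimal certificates could a priori force $u_0$ to be doubly-exponentially long.

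The final step is algorithmic. Once the exponential size bound on certificates is in place, the algorithm enumerates all candidate triples of parse trees up to that bound and, for each, verifies in time polynomial in the certificate that the three derivations are valid, that $v_0$ is productive, that the pump effect is positive, and that $u_0\alpha w$ is executable from $\vec{c}_\init$ in the sense that every prefix evaluates to a non-negative value. Together with the small-witness property, the iterates $u_0\alpha^n w$ are then automatically executable, which certifies unboundedness. Polynomially many checks per candidate combined with exponentially many candidates yield the claimed exponential-time upper bound.
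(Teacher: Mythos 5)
Your high-level plan (small witness property, then verify witnesses efficiently) is the same as the paper's, but both the certificate shape and the final algorithm diverge from the paper in ways that leave real gaps.

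\textbf{The certificate shape is too weak.} Your certificate records a pump $X \gstep{*} \alpha X \beta$ with $\alpha,\beta \in A^*$, but you then certify unboundedness by executing $u_0\,\alpha^n\,w$ and requiring $\sum\alpha > 0$. This discards $\beta$ entirely, and there are unbounded prefix-closed $1$-dimensional GVAS that admit \emph{no} pump with $\sum\alpha > 0$. For instance take $S \pstep X$, $X \pstep 0\,X\,1 \mid 0$ with $c_\init = 0$: the only pump is $\alpha = 0^k$, $\beta = 1^k$, so $\sum\alpha = 0$ for every pump, yet the counter values $\{0,1,2,\ldots\}$ are all reachable. The paper's certificate (\cref{def:certificate}) handles precisely this: the condition $\lin{s} < \lin{t}$ corresponds to your $\sum\alpha > 0$, and the alternative condition $\lout{t} < \lout{s}$ corresponds to $\sum\beta > 0$, and the proof of \cref{thm:certs} treats these as two separate cases ($c_\init \vstep{xu^n} \cdots$ versus $c_\init \vstep{xu^nwv^n} \cdots$). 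You would have to add the $\beta$-pumping direction to your certificate and carry the attendant executability condition on $u_0\alpha^n w\beta^n$ through the size bounds.

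\textbf{The algorithmic step does not give exponential time as described.} You propose to enumerate all candidate parse-tree triples up to an exponential size bound and check each in polynomial time. But a parse tree of exponentially many nodes has doubly-exponentially many candidate labelings, so enumeration yields a doubly-exponential algorithm, not an exponential one. The paper avoids this by bounding the \emph{height} of a minimal certificate (and its numeric labels) by a single exponential (\cref{thm:small_certificates}), and then running an \emph{alternating} Turing machine that guesses and verifies a single branch at a time in polynomial space; $\textsf{APSPACE} = \textsf{EXPTIME}$ gives the upper bound. Your claim that minimality forces height \emph{polynomial} in $\card{V}$ is not what the paper proves and is not obviously true: the argument in \cref{sec:small_certs} gives a polynomial bound on the main branch (\cref{fact:branch-depth-bound}) but only an exponential bound $4^{4(\card{V}+1)}$ on the depth of the other branches, because the same nonterminal may recur on a side branch as long as the output value strictly decreases, and output values range over an exponential interval. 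Without the alternation trick, a correct small-witness bound must be paired with an algorithm that does not fully materialize the tree.

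\textbf{The principal obstacle you flag is real, and the paper resolves it differently than you suggest.} You correctly identify the tension between the positive pump effect and prefix non-negativity (the ``reservoir'' issue). The paper's resolution is not to bound the length of $u_0$ directly but to work with \emph{flow trees} under a lossy semantics (inequalities in the flow conditions), which lets the minimality argument push redundant input/output labels to $-\infty$ and then bound the remaining labels by exponential quantities (\cref{fact:traversal-inequation} through \cref{fact:branch-values-bound-4}); \cref{thm:derivewitness} is used to replace large off-branch subtrees by small ones while preserving the sign of their contribution. Your outline gestures at ``cut-and-paste'' shrinking but does not explain how it preserves executability of the iterated pump, which is exactly what the lossy flow-tree formulation is engineered to make robust.
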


For the remainder of the paper,
we restrict our attention to the dimension one,
and shortly write GVAS instead of $1$-dimensional GVAS.

\begin{example}
    \label{ex:gvas}
  Consider again the Ackermann functions $A_m$ introduced in \cref{ex:ack}.
  These can be expressed by the GVAS with
  nonterminals $X_0, \ldots, X_m$ and with production rules
  $X_0 \pstep 1$ and $X_i \pstep -1 \: X_i \: X_{i-1} \mid 1 X_{i-1}$ for $1\le i\le m$.
  It is routinely checked that $\max\{d \mid c \vstep{X_m} d\}=A_m(c)$ for all $c\in\setN$.
  \qed
\end{example}

Every GVAS can be effectively normalized, in logarithmic space, by
replacing terminals $a \in \setZ$ by words over the alphabet $\{-1,0,1\}$
and then putting the resulting grammar into Chomsky normal form.
In addition,
non-productive nonterminals, and production rules in which they occur,
can be removed. %
So in order to simplify our proofs,
we consider w.l.o.g.~only GVAS of this simpler form.

\begin{assumption}
    We restrict our attention to GVAS $G=(V,A,R,S,c_\init)$
  in Chomsky normal form and where
  $A = \{-1,0,1\}$ and every $X \in V$ is productive.
\end{assumption}

The rest of the paper is devoted to the proof of \cref{thm:main}.
Before delving into its technical details,
we give a high-level description the proof.
In the next section,
we characterize unboundedness in terms of certificates,
which are complete parse trees whose nodes are labeled by natural numbers
(or $-\infty$).
These certificates contain a growing pattern that can be pumped to produce
infinitely many reachable ($1$-dimensional) vectors,
thereby witnessing unboundedness.
We then prove that certificates need not be too large.
To do so,
we first show in \cref{sec:struct} how to bound the size of growing patterns.
Then,
we bound the height and labels of ``minimal'' certificates
in \cref{sec:small_certs}.
Both bounds are singly-exponential in the size of the GVAS.
Thus,
the existence of a certificate can be checked by an alternating Turing machine
running in polynomial space.
This entails the desired {\EXPTIME} upper-bound stated in \cref{thm:main}.

\section{Certificates of Unboundedness}
\label{sec:certificates}
Following our previous work on the GVAS coverability problem~\cite{LerouxSutreTotzke:2015:ICALP},
we annotate parse trees
in a way that is consistent with the VAS semantics.
A \emph{flow tree} for a GVAS $G = (V,A,R,S,c_\init)$ is a complete\footnote{%
  Compared to~\cite{LerouxSutreTotzke:2015:ICALP} where
  flow trees are built on arbitrary parse trees,
  the flow trees that we consider here are always built on complete parse trees.
} parse tree $(T, \lsymoperator)$ for $G$
equipped with two functions $\linoperator, \loutoperator: T \to \setN \cup \{-\infty\}$,
assigning an \emph{input} and an \emph{output} value to each node,
with $\lin{\varepsilon} = c_\init$,
and satisfying, for every node $t \in T$,
the following \emph{flow conditions}:
\begin{enumerate}
\item
  \label{flow-conditions}
  If $t$ is internal with children $t0, \ldots, tk$, then
  $\lin{t0} \leq \lin{t}$,
  $\lout{t} \leq \lout{tk}$, and
  $\lin{t(j+1)} \leq \lout{tj}$ for every $j = 0, \ldots, k-1$.
\item
  If $t$ is a leaf, then
  $\lout{t} \leq \lin{t} + a$ if $\lsym{t} = a \in A$,
  and
  $\lout{t} \leq \lin{t}$ if $\lsym{t} = \varepsilon$.
\end{enumerate}
We shortly write $\lnode{t}{c}{\#}{d}$ to mean that
$(\lin{t}, \lsym{t}, \lout{t}) = (c, \#, d)$.
The \emph{size} of a flow tree is the size of its underlying parse tree.
\Cref{fig:certificates} (left) shows a flow tree for the GVAS of \cref{ex:gvas},
with start symbol $X_1$ and initial ($1$-dimensional) vector $c_\init = 5$.

\begin{remark}
  The flow conditions enforce the VAS semantics
  along a depth-first pre-order traversal of the complete parse tree.
  But,
  as in~\cite{LerouxSutreTotzke:2015:ICALP},
  we only require inequalities instead of equalities.
  This corresponds to a \emph{lossy} VAS semantics,
  where the counter can be non-deterministically decreased~\cite{DBLP:conf/stacs/BouajjaniM99}.
  The use of inequalities in our flow conditions simplifies the presentation
  and allows for certificates of unboundedness with smaller input/output values.
  Note that equalities would be required to get certificates of reachability,
  but the latter problem is out of the scope of this paper.
\end{remark}

\begin{restatable}{lemma}{LemExistenceFT}\label{lem:existence-of-flow-trees}
  For all $d$ with $c_\init \vstep{S} d$,
  there exists a flow tree with $\lout{\varepsilon} = d$.
\end{restatable}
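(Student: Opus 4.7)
The plan is to start from the derivation $c_\init \vstep{S} d$, extract a concrete witnessing word $z$ in $\lang[^G]{S}$ together with a counter trace, and then annotate an arbitrary complete parse tree for $z$ with input/output values read off from this trace. By construction, the flow conditions will even hold with equalities.

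\textbf{Step 1: unfold the semantics.} By definition of $\vstep{S}$, there is some $z \in \lang[^G]{S}$ with $c_\init \vstep{z} d$. Since $G$ is in Chomsky normal form and $z \in A^*$, choose a complete parse tree $(T, \lsymoperator)$ for $G$ whose yield is $z$. Let $t_1 \lex t_2 \lex \cdots \lex t_\ell$ be the leaves of $T$ in lexicographic order, and write $a_i \eqdef \lsym{t_i} \in A \cup \{\varepsilon\}$, so that the concatenation $a_1 \cdots a_\ell$ equals $z$.

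\textbf{Step 2: read off the counter trace.} Since $c_\init \vstep{z} d$, there is a sequence $c_0, c_1, \ldots, c_\ell \in \setN$ with $c_0 = c_\init$, $c_\ell = d$, and for every $i$:
\begin{equation*}
  c_i \;=\; c_{i-1} + a_i \text{ if } a_i \in A, \qquad c_i \;=\; c_{i-1} \text{ if } a_i = \varepsilon.
\end{equation*}
These are the counter values that appear \emph{between} successive leaf events in a depth-first left-to-right traversal of $T$.

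\textbf{Step 3: define the labeling.} For each node $t \in T$, the leaves in its subtree form a contiguous lex-interval, say $t_{i(t)}, t_{i(t)+1}, \ldots, t_{j(t)}$. Set
\begin{equation*}
  \lin{t} \;\eqdef\; c_{i(t)-1}, \qquad \lout{t} \;\eqdef\; c_{j(t)}.
\end{equation*}
For the root, $i(\varepsilon) = 1$ and $j(\varepsilon) = \ell$, so $\lin{\varepsilon} = c_\init$ and $\lout{\varepsilon} = d$ as required.

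\textbf{Step 4: verify the flow conditions.} For an internal node $t$ with children $t0, \ldots, tk$, the leaf intervals of the children partition that of $t$ contiguously, hence $\lin{t0} = \lin{t}$, $\lout{tk} = \lout{t}$, and $\lin{t(j{+}1)} = \lout{tj}$; all three inequalities in condition~(1) are satisfied with equality. For a leaf $t = t_i$, the interval is a single index, so $\lin{t} = c_{i-1}$ and $\lout{t} = c_i$, which by Step~2 gives $\lout{t} = \lin{t} + a_i$ if $a_i \in A$ and $\lout{t} = \lin{t}$ if $a_i = \varepsilon$. Finally, all $c_i$ are natural numbers since $\vstep{z}$ stays in $\setN$, so the labels lie in $\setN \subseteq \setN \cup \{-\infty\}$.

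The only subtlety to watch for is that the claimed contiguity of the leaf interval $t_{i(t)}, \ldots, t_{j(t)}$ beneath any internal node truly holds, but this is an immediate consequence of the lexicographic ordering of $\setN^*$ restricted to a tree: leaves inside the subtree rooted at $t$ are exactly those leaves whose address has $t$ as a prefix, and these form an interval in $\lex$. So there is no real obstacle; the argument is mostly bookkeeping, made even easier by the fact that the flow conditions allow inequalities whereas our construction achieves equalities throughout.
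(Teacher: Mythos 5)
Your proof is correct and follows essentially the same approach as the paper's. The paper only sketches the final step, saying that the complete parse tree for $z$ together with the run $c_\init \vstep{z} d$ ``induces'' a flow tree with root $\lnode{\varepsilon}{c_\init}{S}{d}$; you have simply made that construction explicit, reading the intermediate counter values off the left-to-right leaf order and checking that the flow conditions hold (in fact with equality).
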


Our main ingredient to prove \cref{thm:main} is a small model property.
First,
we show in this section that unboundedness can always be witnessed
by a flow tree of a particular form,
called a certificate (see \cref{def:certificate} and \cref{fig:certificates}).
Then,
we will provide in \cref{thm:small_certificates} exponential bounds on the height and
input/output values of ``minimal'' certificates.
This will entail the desired {\EXPTIME} upper-bound for
the prefix-closed GVAS boundedness problem.

We start by bounding the size of flow trees that do not contain
an iterable pattern, i.e., a nonterminal that repeats, below it,
with a larger or equal input value.
Formally,
a flow tree $(T, \lsymoperator, \linoperator, \loutoperator)$ is
called \emph{good} if
it contains a node $t$ and a proper ancestor $s \pprefix t$ such that
$\lsym{s} = \lsym{t}$
and
$\lin{s} \leq \lin{t}$.
It is called \emph{bad} otherwise.
We bound the size of bad flow trees by
(a) translating them into bad nested sequences, and
(b) using a bound given in~\cite{LPS2014} on the length of bad nested sequences.
Let us first recall some notions and results from~\cite{LPS2014}.
Our presentation is deliberately simplified and limited to our setting.

Let $(S, \preceq, \norm{\cdot})$ be the normed quasi-ordered set defined by
$S \eqdef V \times \setN$,
$(X, m) \preceq (Y, n) \equivdef X = Y \wedge m \leq n$, and
$\norm{(X, m)} = m$.
A \emph{nested sequence} is a finite sequence
$(s_1,h_1), \ldots, (s_\ell,h_\ell)$ of elements in $S \times \setN$
satisfying $h_1 = 0$ and $h_{j+1} \in h_j + \{-1,0,1\}$ for every index
$j < \ell$ of the sequence.
A nested sequence $(s_1,h_1), \ldots, (s_\ell,h_\ell)$ is called \emph{good}
if there exists $i < j$ such that $s_i \preceq s_j$
and $h_i \leq h_{i+1}, \ldots, h_j$.
A \emph{bad} nested sequence is one that is not good.
A nested sequence $(s_1,h_1), \ldots, (s_\ell,h_\ell)$
is called \emph{$n$-controlled}, where $n \in \setN$,
if $\norm{s_j} < n+j$ for every index $j$ of the sequence.

\begin{theorem}[{\cite[Theorem VI.1]{LPS2014}}]
  \label{thm:bad-nested-sequences}
  Let $n \in \setN$ with $n \geq 2$.
  Every $n$-controlled bad nested sequence has length at most
  $F_{\omega \ldotp \card{V}}(n)$.
\end{theorem}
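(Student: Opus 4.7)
The plan is to use the standard length-function machinery for controlled bad sequences in a well-quasi-ordered set, adapted to the height-nested setting. The basic observation is that $(S, \preceq)$ with $S = V \times \setN$ is the disjoint union, indexed by $V$, of $\card{V}$ copies of $(\setN, \leq)$, and hence has ordinal maximum order type $\omega \ldotp \card{V}$; the Length Function Theorem for WQOs then suggests precisely the bound $F_{\omega \ldotp \card{V}}(n)$. What really needs to be checked is that the nested-height condition ``$h_i \leq h_{i+1}, \ldots, h_j$'', which restricts which pairs count for badness, does not change this bound.

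First I would treat a single height level. Call an interval $[i,j]$ an \emph{arch at level $h$} if $h_i = h_j = h$ and $h_k \geq h$ for all $i \leq k \leq j$. Restricted to positions inside an arch, the badness condition forbids $s_i \preceq s_j$ outright, so the symbols at those positions form an ordinary bad sequence in $(S, \preceq)$. For each fixed nonterminal $X$, this forces the $\setN$-values of occurrences of $X$ inside the arch to be strictly decreasing; combined with the $n$-controlled hypothesis $\norm{s_j} < n + j$, a standard Ackermann-level bound gives $F_\omega$ as the length per arch per nonterminal, and hence per arch once multiplied by $\card{V}$.

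Next I would carry out the induction on $\card{V}$, which is the delicate step. The idea is to single out one nonterminal $X$ and classify pairs of occurrences of $X$ as either \emph{non-separated}, meaning the height walk stays at or above the earlier height between them, or \emph{separated}, meaning it drops strictly below. Non-separated occurrences force a strictly decreasing chain of $\setN$-values, of length bounded by an $F_\omega$-function of the control. The separating height-drops in turn induce a partition of the whole sequence into blocks; contracting each block to a single representative leaves a nested sequence over $V \setminus \{X\}$ whose controlled length is bounded by $F_{\omega \ldotp (\card{V}-1)}$ by the induction hypothesis. Composing these two bounds, via the Hardy/Cicho\'n-like identity $F_{\omega \ldotp k} = F_{\omega \ldotp (k-1)} \circ F_\omega$ (up to the usual control shifts), yields the target $F_{\omega \ldotp \card{V}}(n)$.

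The main obstacle I anticipate is tracking the control function carefully through the induction and through the passage from arches to the global sequence. Removing a nonterminal reshuffles indices so that the original control bound $\norm{s_j} < n + j$ weakens on the residual sequence, and the height-step condition $h_{j+1} \in h_j + \{-1, 0, 1\}$ is only preserved after a re-indexing. Keeping these shifts from accumulating an extra factor of $\omega$, which would push the ordinal to $\omega \ldotp \card{V} + \omega$ or worse, is where the bookkeeping really lives; at that point I would invoke the precise calculus of controlled bad sequences as developed in the length-function literature, so that all the arithmetic on control functions collapses cleanly into the claimed ordinal $\omega \ldotp \card{V}$.
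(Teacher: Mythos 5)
This statement is not proved in the paper at all: it is imported verbatim, with an explicit citation, from~\cite[Theorem VI.1]{LPS2014}. There is consequently no ``paper's own proof'' to compare your sketch against; the authors treat the bound on controlled bad nested sequences as a black box and only use it (in \cref{lem:bound-on-bad-flow-trees}) to bound the size of bad flow trees.

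That said, your outline is in the right spirit for how such a result is actually established in the length-function literature that~\cite{LPS2014} builds on: the normed set $S = V \times \setN$ has maximal order type $\omega \cdot \card{V}$, the ``arch'' decomposition isolates ordinary bad sequences at a fixed nesting level, and an induction on $\card{V}$ with the composition identity relating $F_{\omega \cdot k}$ to $F_{\omega \cdot (k-1)} \circ F_\omega$ is the expected mechanism. But as written it is a plan, not a proof. The two load-bearing steps --- (i) that contracting blocks between separating height-drops leaves a \emph{nested} sequence (i.e.\ the $\pm 1$ step condition survives the re-indexing) over $V \setminus \{X\}$ with a control function you can still account for, and (ii) that the resulting arithmetic on control functions collapses to exactly $\omega \cdot \card{V}$ rather than $\omega \cdot \card{V} + \omega$ or $\omega^2$ --- are precisely where you say ``I would invoke the precise calculus of controlled bad sequences.'' Those are the parts that require real work, and deferring them means you have restated the difficulty rather than resolved it. Since the paper itself sidesteps all of this by citation, the honest verdict is: your approach is plausible and consistent with the known proof technique for~\cite{LPS2014}, but it is not a self-contained argument, and there is nothing in the present paper against which to check the details.
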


The function $F_{\omega \ldotp \card{V}} : \setN \to \setN$ used in the
theorem is part of the \emph{fast-growing hierarchy}.
Its precise definition (see, e.g.,~\cite{LPS2014}) is not important
for the rest of the paper.
The following lemma provides a bound on the size of bad flow trees.
Notice that this lemma applies to arbitrary GVAS (not necessarily prefix-closed).

\begin{restatable}{lemma}{LemBoundBadFT}\label{lem:bound-on-bad-flow-trees}
  Every bad flow tree has at most $F_{\omega \ldotp \card{V}}(c_\init + 2)$ nodes.
\end{restatable}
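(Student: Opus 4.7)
The plan is to convert any bad flow tree into a bad nested sequence in the sense of \cref{thm:bad-nested-sequences} and then read back an exponential bound on the tree size. Given a flow tree $(T, \lsymoperator, \linoperator, \loutoperator)$, I would enumerate its internal nodes $u_1 \lex u_2 \lex \cdots \lex u_\ell$ in pre-order; by Chomsky normal form these coincide with the successive leftmost nonterminals processed in the derivation of the yield. For each step $j$ I would set $s_j \eqdef (\lsym{u_j}, \lin{u_j}) \in V \times \setN$ and let $h_j$ be one less than the number of nonterminals in the sentential form at step $j$. Every production is of type $X \to YZ$ (increases the count by $+1$), $X \to a$ (decreases it by $1$), or $S \to \eps$ (decreases it by $1$), so consecutive $h_j$ differ by exactly $\pm 1$ and $h_1 = 0$. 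Since each of the $j-1$ preceding derivation steps contributes at most one terminal in $\{-1,0,1\}$ to the counter, the flow conditions force $\lin{u_j} \le c_\init + (j-1)$, making the resulting sequence $(c_\init + 2)$-controlled.

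The heart of the argument, and the main obstacle, is to show that a good pattern $i < j$ in this nested sequence produces an iterable ancestor--descendant pair in $T$. The constraint $h_{i+1} \ge h_i$ already forces the rule at $u_i$ to be of type $X \to YZ$, so $u_i$ is a branching internal node. While the leftmost derivation is busy inside $u_i$'s subtree, the unprocessed right siblings of $u_i$ and of each of its ancestors remain in the sentential form; the nonterminal count stays $\ge h_i$ on exactly those steps and drops to $h_i - 1$ the instant control leaves $u_i$'s subtree. The hypothesis $h_k \ge h_i$ for all $k \in \{i, \dots, j\}$ therefore forces $u_j$ to lie in $u_i$'s subtree, and since $j > i$ it is a proper descendant. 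The remaining condition $s_i \preceq s_j$ unpacks to $\lsym{u_i} = \lsym{u_j}$ and $\lin{u_i} \le \lin{u_j}$, which is exactly an iterable pair. Contrapositively, a bad flow tree yields a bad nested sequence.

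To conclude, I would apply \cref{thm:bad-nested-sequences} to get $\ell \le F_{\omega \ldotp \card{V}}(c_\init + 2)$. Because every internal node in Chomsky normal form has either two internal children or one leaf child, the tree has at most $2\ell$ nodes, and this constant factor is absorbed by bumping the argument of the fast-growing function $F_{\omega \ldotp \card{V}}$. The bookkeeping for the nested-sequence axioms, the control bound, and the leaf-to-internal count is routine; only the sentential-form analysis in the middle paragraph requires care.
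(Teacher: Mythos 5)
Your construction is correct in spirit and is a genuine variant of the paper's: where you build a one-visit pre-order sequence over internal nodes with $h_j$ tracking the width of the sentential form, the paper builds a three-visit pre-order sequence (using three copies $X, X', X''$ of each symbol) with $h_j$ tracking tree depth $\len{t}$. Both reductions feed the same black-box bound (\cref{thm:bad-nested-sequences}), and both hinge on the same key step, which you argue correctly: if $h_i \leq h_{i+1}, \ldots, h_j$ then control never leaves the subtree rooted at $u_i$ in the interval $[i,j]$, so $u_j$ is a proper descendant of $u_i$, and $s_i \preceq s_j$ then unpacks to a good pair in the flow tree. The controlled-ness argument ($\lin{u_j} \leq c_\init + (j-1)$ because at most one $+1$ leaf is emitted per prior derivation step) is also right and essentially matches the paper's.

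The one genuine gap is at the very end. Your nested sequence has one entry per internal node, so you only get $\ell \leq F_{\omega \ldotp \card{V}}(c_\init+2)$ for the number of internal nodes, and then $\card{T} \leq 2\ell$ (more precisely $(3\ell+1)/2$). The lemma, however, asserts $\card{T} \leq F_{\omega \ldotp \card{V}}(c_\init+2)$ with no extra factor. Saying the factor ``is absorbed by bumping the argument'' is not a proof of the stated bound, and in any case changes the constant in the statement. The paper sidesteps this precisely by the three-visit encoding: every internal node contributes three entries, so $\ell = 3 \cdot (\text{number of internal nodes}) \geq \card{T}$, and the bound on $\ell$ transfers directly to $\card{T}$. (For the downstream use in \cref{thm:certs} either bound would do, so this is a presentational gap rather than a fatal one, but as written your proof establishes a slightly weaker inequality than the one claimed.) If you want to keep your cleaner one-visit encoding, you should either weaken the lemma statement accordingly, or argue separately that the extra factor can be charged to the monotonicity and growth of $F_{\omega \ldotp \card{V}}$ — and then \cref{thm:certs} must be checked against the modified constant.
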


A good flow tree contains an iterable pattern that can be ``pumped''.
However, the existence of such a pattern does not guarantee unboundedness.
For that,
we need stronger requirements on the input and output values,
as defined below.

\begin{definition}[Certificates]
  \label{def:certificate}
  A \emph{certificate} for a given GVAS is a flow tree
  $(T, \lsymoperator, \linoperator, \loutoperator)$ equipped with
  two nodes $s \pprefix t$ in $T$
  such that
  $$
  \lsym{s} = \lsym{t}
  \quad\text{and}\quad
  \lin{s} \leq \lin{t}
  \quad\text{and}\quad
  \lin{s} < \lin{t} \ \text{or} \ \lout{t} < \lout{s}
  $$
\end{definition}

We now present the main result of this section,
which shows that unboundedness can always be witnessed
by a certificate.

\begin{figure}[t]
  \tikzset{
    internal/.style={rectangle,draw=gray,thick},
    leaf/.style={rectangle,draw=gray,thick},
    iovalue/.style={inner sep=0pt,
            minimum size=0pt,
            font=\scriptsize,
            label distance=0cm,
        },
    lout/.style={label={[iovalue,text=blue]0:#1}},
    lin/.style={label={[iovalue, text=red]180:#1}},
    index/.style={label={[font=\scriptsize, label distance=0.6cm,]left:{#1:}}},
    every node/.style={inner sep=0pt, minimum size=5mm,text centered,},
}
\begin{minipage}[t]{.4\textwidth}
\centering
\begin{tikzpicture}[
    index/.style={},
    level distance=1.6cm,
]
\node [internal, index=$\eps$, lin=$5$, lout=$-\infty$] at (0,0){$X_1$}
  child {node[leaf,index=0, lin=5,lout=4] (0) {$-1$}}
  child {
      node[internal, index=1, lin=4,lout=5] (1) {$X_1$}
      child { node[leaf, index=10,lin=3,lout=4] (10) {$1$} }
      child {
          node[internal, index=11, lin=4,lout=5] (11) {$X_0$}
          child {
              node[leaf, index=110, lin=4, lout=5] (110) {$1$}
          }
      }
      child [missing]
  }
  child {
      node[internal, index=02, lin=5,lout=$-\infty$] (02) {$X_0$}
      child {
          node[leaf, index=020, lin=$-\infty$,lout=$-\infty$] (20) {$1$}
      }
  };
\end{tikzpicture}
\end{minipage}
\begin{minipage}[t]{.6\textwidth}
  \centering
\begin{tikzpicture}[
    auxnode/.style={inner sep=0pt, minimum height=0cm},
    MB/.style={-,decorate,decoration={snake, pre length=0cm}},
    node distance=1.2cm and 0.5cm,
    ]
    \path[use as bounding box] (-2.8,0) rectangle (2.8,-5);
    \node [internal, index=$\eps$, lin=$c_{\init}$, lout=$-\infty$] at (0,0) (ROOT) {${S}$};
    \node [internal,
           below= of ROOT,
           index=$s$,
           lin=$\lin{s}$,
           lout=$\lout{s}$] (s) {$X$};
    \node [internal,
           below= of s,
           index=$t$,
           lin=$\lin{t}$,
           lout=$\lout{t}$]  (t) {$X$};
    \draw [MB] (ROOT) -- (s);
    \draw [MB] (s) -- (t);

    \node [auxnode,below= of t] (M) {};
    \node [auxnode,left of=M, xshift=0.5cm] (LLL) {};
    \node [auxnode,left= of LLL] (LL) {};
    \node [auxnode,left= of LL](L) {};

    \node [auxnode,right of=M, xshift=0-.5cm] (RRR) {};
    \node [auxnode,right= of RRR] (RR) {};
    \node [auxnode,right= of RR](R) {};

    \path[draw] (ROOT) edge[out=south west, in=90] (L);
    \path[draw] (s)    edge[out=south west, in=90] (LL);
    \path[draw] (t)    edge[out=south west, in=90] (LLL);

    \path (L) 
      -- node {$x$} (LL)
      -- node {$u$} (LLL)
      -- node {$w$} (RRR)
      -- node {$v$} (RR)
      -- node {$y$} (R);
    
      \path[draw] (ROOT) edge[out=south east, in=90] (R);
      \path[draw] (s)    edge[out=south east, in=90] (RR);
      \path[draw] (t)    edge[out=south east, in=90] (RRR);
\end{tikzpicture}
\end{minipage}
\caption{{\bf Left:} a flow tree for the GVAS of \cref{ex:gvas}
    with $c_{\init}=5$.
    Input and output values are indicated in red and blue, respectively.
    {\bf Right:} A certificate with $\lsym{t}=\lsym{s}=X$
    and yield $xuwvy\in A^*$.
    It must hold that either
    $\lin{s} < \lin{t}$ or
    $\lin{s} = \lin{t}$ and $ \lout{t} <\lout{s}$.
    \label{fig:certificates}}
\end{figure}

\begin{restatable}{theorem}{ThmCerts}
    \label{thm:certs}
  A prefix-closed GVAS $G$ is unbounded
  if, and only if,
  there exists a certificate for $G$.
\end{restatable}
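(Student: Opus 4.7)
The plan is to prove the two implications separately.

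For the backward direction, I will extract three derivations from a certificate with ancestor--descendant pair $s \pprefix t$ and common label $X = \lsym{s} = \lsym{t}$: walking the parse tree along the spine $\eps \to s \to t$ and through the subtree at $t$ produces $S \gstep{*} x X y$, $X \gstep{*} u X v$, and $X \gstep{*} w$ with $x, u, w, v, y \in A^*$. Iterating the middle derivation yields $S \gstep{*} x u^n w v^n y \in \lang[^G]{S}$ for every $n \geq 0$, and prefix-closedness of $\lang[^G]{S}$ makes every prefix of these words an element of $\lang[^G]{S}$ as well. The strictness condition of \cref{def:certificate} then translates into a positive pumping quantity: if $\lin{s} < \lin{t}$, the flow conditions inside the subtree at $s$ give $\lin{s} + \sum u \geq \lin{t}$, hence $\sum u > 0$; if $\lin{s} = \lin{t}$ and $\lout{t} < \lout{s}$, a symmetric argument on the right of $t$ yields $\sum v > 0$.

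It then suffices to show that, starting from $c_\init$, the non-lossy VAS execution of the prefix $xu^n$ (respectively $xu^n w v^n$) is feasible and reaches a value growing linearly in $n$. For this I will prove a short lift lemma by induction on tree structure: for every subtree rooted at a node with finite output label and every initial value at least the input label, non-lossy execution of the yield stays non-negative and reaches a value at least the output label, because the slack inequalities in the flow conditions only add room. Combining this lift along the spine $\eps \to s \to t$ with the positivity of $\sum u$ or $\sum v$ then witnesses unboundedness.

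For the forward direction, assume $G$ is unbounded and let $d_1 < d_2 < \cdots$ be reachable values going to infinity. By \cref{lem:existence-of-flow-trees}, each $d_n$ is witnessed by a flow tree $T_n$ with $\lout{\eps} = d_n$. I will introduce a \emph{pruning} operation: whenever $T_n$ contains a pair $s \pprefix t$ with $\lsym{s} = \lsym{t}$, $\lin{s} = \lin{t}$, and $\lout{s} \leq \lout{t}$, I replace the subtree rooted at $s$ by a copy of the subtree rooted at $t$. Inputs match exactly, and $\lout{t} \geq \lout{s}$ preserves every ancestor-flow condition, so the pruned tree is a valid, strictly smaller flow tree with unchanged root output $d_n$. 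Iterating terminates in a maximally pruned tree $T_n^*$ in which every pair $s \pprefix t$ with $\lsym{s} = \lsym{t}$ and $\lin{s} \leq \lin{t}$ must satisfy one of the strict inequalities of \cref{def:certificate}, i.e., constitute a certificate. It now suffices to ensure that $T_n^*$ is not bad for $n$ large enough; by \cref{lem:bound-on-bad-flow-trees} bad flow trees have size at most $F_{\omega \ldotp \card{V}}(c_\init + 2)$, which also bounds their root output, so for $d_n$ exceeding this fixed threshold $T_n^*$ must contain a good pair and hence exhibit the desired certificate.

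The main obstacle lies in handling $-\infty$ labels, which may legitimately appear in a flow tree. In the backward direction, if $\lout{t} = -\infty$ in a certificate with $\lin{s} = \lin{t}$, the original subtree at $t$ need not be non-lossily executable and the positivity of $\sum v$ is not directly available from the flow conditions. I plan to resolve this by choosing $w$ to be any terminal derivation of $X$ (which exists by productivity) rather than the yield of the subtree at $t$, and by restricting the non-lossy execution to prefixes that only traverse subtrees with finite output labels along the spine $\eps \to s \to t$. Since $\lin{s}$ and $\lin{t}$ are necessarily finite under the strictness condition, the input/output values along this spine are all finite, and a careful case analysis shows that the growing prefix still exhibits unboundedness, completing the argument.
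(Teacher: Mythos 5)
Your proposal is essentially correct and follows the same structure as the paper's proof: the forward direction combines \cref{lem:existence-of-flow-trees} with \cref{lem:bound-on-bad-flow-trees} to extract a good pair from a sufficiently large flow tree (you reach a minimal tree by iterated pruning, the paper simply picks one of least size --- an equivalent move), and the backward direction decomposes the yield as $xuwvy$ and pumps $u$ or $uv$ using prefix-closedness. One small remark: the obstacle you raise --- $\lout{t}=-\infty$ in a certificate where $\lin{s}=\lin{t}$ --- cannot actually occur. If $\lout{t}=-\infty$, then propagating the flow conditions upward (first through siblings to the right of $t$, then through the parent) forces $\lout{p}=-\infty$ for every ancestor $p\prefix t$, in particular $\lout{s}=-\infty$, contradicting the certificate condition $\lout{t}<\lout{s}$. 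So your workaround of replacing $w$ by an arbitrary terminal derivation is unnecessary; the paper's direct argument already applies.
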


\section{Growing Patterns}
\label{sec:struct}
Certificates depicted on \cref{fig:certificates} (right)
introduce words $u\in A^*$ satisfying a sign constraint $\sum u>0$ or
$\sum u=0$. These 
words are derivable from words of non-terminal symbols
$S_1\ldots S_k$ corresponding to the left children of the nodes between $s$
and $t$. In order to obtain small certificates, in this section, we
provide bounds on the minimal length of words $u'\in A^*$ that can
also be derived from
$S_1\ldots S_k$ and that satisfy the same sign constraint
as $u$.

\smallskip
Let us first introduce the \emph{displacement} of a GVAS $G$ as the ``best shift'' achievable by a word in $\lang[^G]{}$ and defined by the following equality\footnote{Notice that $\displ[^G]{}$ may be negative.}:
$$
\displ[^G]{} \ \eqdef \ \sup\{\textstyle\sum z \mid z \in \lang[^G]{}\}
$$

When the displacement is finite, the following \cref{lem:elementary} shows that it is achievable by a complete elementary parse tree. We say that a parse tree $T$ is \emph{elementary} if for every $s\prefix t$ such that $\lsym{s}=\lsym{t}$, we have $s=t$. Notice that the size of an elementary parse tree is bounded by $2^{|V|+1}$.
\begin{restatable}{lemma}{LemElementary}
    \label{lem:elementary}
  Every GVAS $G$ admits a complete elementary parse with a yield $w$ such that $\displ[^G]{}\in\{\sum w,+\infty\}$.
\end{restatable}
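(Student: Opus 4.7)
The plan is to choose a complete parse tree witnessing the displacement and progressively shrink it until it becomes elementary, using a standard pumping argument tailored to the Chomsky normal form setting.

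First I would dispose of the easy case $\displ[^G]{} = +\infty$. Since every nonterminal is productive by our standing assumption, the start symbol $S$ admits at least one complete parse tree, and by repeatedly applying the collapsing operation described below one reduces it to an elementary one; its yield trivially satisfies $\displ[^G]{} \in \{\sum w, +\infty\}$.

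Now assume $\displ[^G]{}$ is finite. Since $A = \{-1,0,1\}$, the set $\{\sum z \mid z \in \lang[^G]{}\}$ is a bounded-above set of integers, so the supremum is attained. Pick a complete parse tree $T$ whose yield $w$ satisfies $\sum w = \displ[^G]{}$, and among all such trees pick one of minimum size. The key claim is that $T$ must then be elementary. Suppose, for contradiction, that it is not: there are nodes $s \pprefix t$ in $T$ with $\lsym{s} = \lsym{t} = X$. Let $T_s$ and $T_t$ denote the subtrees rooted at $s$ and $t$, with yields $w_s, w_t \in A^*$, and set $\delta \eqdef \sum w_s - \sum w_t$. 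Two surgical operations are available on $T$: one may \emph{collapse} the loop by replacing $T_s$ with $T_t$ (this yields a valid complete parse tree rooted at the same symbol, since $T_t$ is also rooted at $X$), which produces a yield of sum $\sum w - \delta$; or one may \emph{pump} the loop by inserting additional copies of the derivation segment between $s$ and $t$, producing yields of sum $\sum w + k\delta$ for every $k \geq 1$.

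The three sign cases for $\delta$ all lead to a contradiction. If $\delta > 0$, pumping yields arbitrarily large sums, contradicting finiteness of $\displ[^G]{}$. If $\delta < 0$, collapsing produces a yield of sum $\sum w - \delta > \sum w = \displ[^G]{}$, contradicting maximality. If $\delta = 0$, collapsing produces a strictly smaller complete parse tree with the same yield sum $\displ[^G]{}$, contradicting minimality of $T$. Hence $T$ is elementary, which proves the lemma. No step looks genuinely hard; the only mild bookkeeping is checking that collapsing and pumping really produce complete parse trees in Chomsky normal form, which is immediate since the subtrees involved are rooted at the same nonterminal $X$.
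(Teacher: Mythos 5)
Your proof is correct and follows essentially the same route as the paper: pick a size-minimal complete parse tree whose yield attains the displacement (or any complete parse tree in the $+\infty$ case), then observe that a repeated nonterminal $X$ on a root-to-leaf path lets you either pump (contradicting finiteness if $\delta>0$) or collapse (contradicting either maximality of $\sum w$ if $\delta<0$ or minimality of the tree if $\delta=0$). The paper merges your three sign cases into two ($\delta>0$ versus $\delta\le 0$), but the argument is the same; the only tiny imprecision in your write-up is attributing attainment of the supremum to $A=\{-1,0,1\}$, when it really just follows from the set of sums being a nonempty set of integers bounded above.
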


Given a non-terminal symbol $X$, we denote by $G[X]$ the context-free
grammar obtained from $G$ by replacing the start symbol by $X$.
We are now ready to state the main observation of this section.
\begin{theorem}\label{thm:derivewitness}
  For every sequence $S_1,\ldots,S_k$ of non-terminal symbols of a GVAS $G$ there exists a sequence $T_1,\ldots,T_k$ of complete parse trees $T_j$ for $G_j\eqdef G[S_j]$ with a yield $z_j$ such that $|T_1|+\cdots+|T_k|\leq 3k4^{|V|+1}$, and such that $\sum z_1\ldots z_k>0$ if $\displ[^{G_1}]{}+\cdots+\displ[^{G_k}]{}>0$, and $\sum z_1\ldots z_k=0$ if $\displ[^{G_1}]{}+\cdots+\displ[^{G_k}]{}=0$.
\end{theorem}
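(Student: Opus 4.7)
The plan is to perform a case analysis on the signs of the individual displacements $D_j\eqdef\displ[^{G_j}]{}$. Since every nonterminal is productive, $\lang[^{G_j}]{}\neq\emptyset$, so $D_j\in\setZ\cup\{+\infty\}$; set $D\eqdef D_1+\cdots+D_k$. First I dispose of the case where all $D_j$ are finite, which covers both $D=0$ (a $+\infty$ summand cannot yield a finite sum since all other summands are strictly above $-\infty$) and the sub-case of $D>0$ where no displacement is infinite. For each $G_j$, \cref{lem:elementary} provides a complete elementary parse tree $T_j$ of size $\leq 2^{|V|+1}$ with yield $z_j$ satisfying $\sum z_j=D_j$. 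Summing gives $\sum z_1\cdots z_k = D$, and the total size is at most $k\cdot 2^{|V|+1}\leq 3k\cdot 4^{|V|+1}$, as required. The degenerate case $D<0$ (for which the theorem imposes no sum constraint) is covered by the same construction.

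The remaining case is $D>0$ with at least one $j^*$ satisfying $D_{j^*}=+\infty$. For every $i\neq j^*$ I apply \cref{lem:elementary} to obtain an elementary tree $T_i$ of size $\leq 2^{|V|+1}$ with yield $z_i$; then $|\sum z_i|\leq 2^{|V|+1}$, so $M\eqdef -\sum_{i\neq j^*}\sum z_i\leq (k-1)\cdot 2^{|V|+1}$. It suffices to produce $T_{j^*}$ with $\sum z_{j^*}\geq M+1$ and size $O(k\cdot 4^{|V|+1})$.

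The crucial ingredient---and the only non-routine step---is a \emph{pumping subclaim}: if $\displ[^{G_{j^*}}]{}=+\infty$, then for every integer $N\geq 0$ there is a complete parse tree of $G_{j^*}$ with yield sum $\geq N$ and size at most $(N+2)\cdot 2^{|V|+1}$. My strategy for proving it mirrors the proof of \cref{lem:elementary}. Call an \emph{elementary pumping context} for a nonterminal $X$ a parse-tree fragment corresponding to a derivation $X\Rightarrow^*\alpha X\beta$ whose spine from the $X$-root to the $X$-hole uses each nonterminal at most once; its \emph{contribution} is $\sum\alpha'+\sum\beta'$ where $\alpha'$, $\beta'$ result from expanding $\alpha$, $\beta$ using elementary complete parse trees. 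If every such context (for every productive $X$ reachable from $S_{j^*}$) had contribution $\leq 0$, then any complete parse tree could be decomposed into an elementary backbone of size $\leq 2^{|V|+1}$ plus nested pumping contexts each contributing $\leq 0$, making the total yield sum bounded---contradicting $D_{j^*}=+\infty$. Hence there exist a nonterminal $X$, an elementary prefix context from $S_{j^*}$ to $X$ of size $\leq 2^{|V|+1}$, an elementary pumping context $C$ on $X$ of size $\leq 2^{|V|+1}$ with contribution $\geq 1$, and an elementary complete tree for $X$ of size $\leq 2^{|V|+1}$. Stacking $N$ copies of $C$ between the prefix context and the base tree produces a complete parse tree of size at most $(N+2)\cdot 2^{|V|+1}$ and yield sum at least $N$ minus a constant in $O(2^{|V|+1})$; inflating $N$ by this constant yields the subclaim.

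Applying the subclaim with $N=M+1+O(2^{|V|+1})$ produces $T_{j^*}$ of size at most $(M+3+O(2^{|V|+1}))\cdot 2^{|V|+1}\leq 2k\cdot 4^{|V|+1}$, and together with the $(k-1)$ elementary trees for $i\neq j^*$ the grand total is at most $3k\cdot 4^{|V|+1}$ for every $k\geq 1$, while $\sum z_1\cdots z_k\geq 1>0$ as required. The main obstacle will be establishing the pumping subclaim with the stated linear size dependence---in particular, bookkeeping elementary pumping contexts so that the size constants remain absorbed in the factor $2^{|V|+1}$; once this is in hand, the remaining arithmetic is a routine verification of the $3k\cdot 4^{|V|+1}$ bound.
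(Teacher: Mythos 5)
Your high-level strategy matches the paper's: split on whether any individual displacement $\displ[^{G_j}]{}$ is infinite, handle the all-finite case via \cref{lem:elementary}, and for the infinite case insert a pumping pattern into the tree for the one GVAS with infinite displacement. The gap is precisely where you flag it yourself: the pumping subclaim is stated with a size bound $(N+2)\cdot 2^{|V|+1}$ that cannot be achieved. Once the side-subwords $\alpha,\beta$ of an ``elementary pumping context'' $X\Rightarrow^*\alpha X\beta$ are expanded into actual complete parse trees (as required to obtain a genuine parse-tree fragment), each of the up-to-$|V|$ spine siblings may carry an elementary subtree of size up to $2^{|V|+1}$; so a single expanded context may have on the order of $|V|\cdot 2^{|V|+1}$ nodes, which is why the paper bounds it by $4^{|V|+1}$ in \cref{lem:displinfinite}. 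The prefix context from $S_{j^*}$ down to $X$, which you also bound by $2^{|V|+1}$, has the same issue and is only $4^{|V|+1}$ by \cref{lem:derivable}. Already for $|V|=1$ a fully expanded pumping context can have $6>2^{|V|+1}=4$ nodes, so the claimed bound fails.

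With the corrected context size, stacking $N$ copies costs on the order of $N\cdot 4^{|V|+1}$ rather than $N\cdot 2^{|V|+1}$, and since you must take $N$ roughly proportional to $k\cdot 2^{|V|+1}$ to overcome the negative contributions of the other elementary trees, your final tally no longer fits inside the budget as cleanly as you compute. The paper's own proof organises this step differently: it isolates the pumping pattern and the prefix context into \cref{lem:displinfinite} and \cref{lem:derivable} (each with bound $4^{|V|+1}$), fixes the number $n$ of pump insertions in terms of those bounds, and bounds $|T_p|\leq |T|-1+n(|T_+|-1)+|T'|$. For a corrected write-up you should either cite those two lemmas directly or re-prove your subclaim with the $4^{|V|+1}$ factor, and then redo the closing arithmetic with the weaker constants.
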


We first provide bounds on complete parse trees that witness the following properties $\displ[^G]{}=+\infty$ and $X$ is derivable. Formally, a nonterminal $X$ is said to be \emph{derivable} if there exists $w\in (A\cup V)^*$ that contains $X$ and such that $S\gstep{*}w$.
\begin{restatable}{lemma}{LemDisplInfinite}
    \label{lem:displinfinite}
  If $\displ[^G]{}=+\infty$, there exists a parse tree for $G[X]$ where $X$ is a non-terminal symbol derivable from the start symbol $S$ with a yield $uXv$ satisfying $u,v\in A^*$, $\sum uv>0$, and a number of nodes bounded by $4^{|V|+1}$. 
\end{restatable}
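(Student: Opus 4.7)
The plan is to consider a positive pumping tree of minimum total size, ranging over all growing non-terminals derivable from $S$, and to show via structural arguments exploiting its minimality that such a minimum must have size at most $4^{|V|+1}$.

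First I would establish existence of a growing non-terminal $X$ derivable from $S$, meaning some $X$ admitting a derivation $X \gstep{*} uXv$ with $u,v\in A^*$ and $\sum uv>0$. Since $\displ[^G]{} = +\infty$ strictly exceeds the finite value $D \eqdef \max\{\sum \text{yield}(T) : T \text{ complete elementary for } G\}$ (finite thanks to \cref{lem:elementary}), some complete parse tree $T$ for $S$ has yield sum $>D$. If every ``inside repeat'' $(X,t,t')$ in $T$ satisfied $\sum uv\le 0$ for the ring yield $u,v$ around the inner subtree $T_{t'}$, then iteratively contracting such repeats (replacing $T_t$ by $T_{t'}$) would yield strictly smaller trees whose yield sums do not decrease; since this cannot terminate at an elementary tree (which has yield sum $\le D$), it must get stuck at a tree witnessing an inside repeat with $\sum uv>0$, giving the desired growing $X$.

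Next I would pick a positive pumping tree $T^*$ of minimum total size, ranging over all such pairs of a growing non-terminal $X'$ derivable from $S$ and a positive pumping for $X'$. Let $X$ label its root and marked leaf. I would then prove two structural constraints on $T^*$: (a) the spine of $T^*$ (root-to-marked-leaf path) contains no non-terminal repeat other than the trivial one at its endpoints, so has length at most $|V|$; and (b) every hanging subtree $H$ of $T^*$ is elementary, hence of size at most $2^{|V|+1}$. Both rely on a single dichotomy applied to a hypothetical repeat $(Y,s,s')$: either the induced sub-derivation $Y\gstep{*}\alpha Y\beta$ has $\sum\alpha\beta>0$, which gives a strictly smaller positive pumping for $Y$ derivable from $S$, contradicting the global minimality of $T^*$; or $\sum\alpha\beta\le 0$, in which case replacing the subtree at $s$ by the subtree at $s'$ strictly shrinks the overall tree while its yield sum can only increase (since it changes by $-\sum\alpha\beta\ge 0$), still contradicting minimality. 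Combining the two constraints yields $|T^*|\le (|V|+1) + |V|\cdot 2^{|V|+1} \le 4^{|V|+1}$, using $|V|+1 \le 2^{|V|}$.

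The delicate point I expect to be the main obstacle is case (b): a hanging subtree may be rooted at a non-terminal of infinite displacement, so one cannot simply replace it by a yield-maximising elementary parse tree via \cref{lem:elementary}, since no such maximum exists. This is precisely why the minimality of $T^*$ must be taken \emph{globally} over all growing non-terminals derivable from $S$, and not merely over pumping trees for the fixed $X$: this extra freedom is what lets the positive-sub-pumping branch of the dichotomy inside a hanging subtree yield a valid contradiction by exhibiting a smaller pumping rooted at the inner non-terminal $W$ rather than $X$.
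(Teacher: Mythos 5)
Your proposal is correct and matches the paper's own proof essentially line for line: establish a positive pumping via a minimal complete parse tree with large yield sum, minimize globally over all derivable growing nonterminals, apply the same dichotomy to bound the spine length by $\card{V}$ and to show hanging subtrees are elementary, and combine to get the $4^{\card{V}+1}$ bound. Your closing observation about why the minimization must range over all derivable growing nonterminals (rather than pumpings of a fixed $X$) correctly identifies the crucial subtlety that the paper handles implicitly with its ``pick $X$ and $T$ so that $|T|$ is minimal'' step.
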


\begin{restatable}{lemma}{LemDerivable}
\label{lem:derivable}
  For every derivable non-terminal symbol $X$, there exists a parse tree with a yield in $A^* X A^*$ and a number of nodes bounded by $4^{|V|+1}$.
\end{restatable}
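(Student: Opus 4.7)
The plan is to take an arbitrary derivation $S \gstep{*} w$ with $w \in (A \cup V)^*$ containing $X$ (which exists by the definition of derivability), view it as a partial parse tree in which some leaf $t^*$ is labelled by $X$, and then shrink it in two stages: first make the root-to-$t^*$ spine elementary, and then replace each off-spine subtree by a small complete elementary parse tree for the corresponding nonterminal.

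For the first stage, I focus on the path $\pi$ from the root down to $t^*$. If two strictly nested nodes $s \pprefix t$ on $\pi$ share the same nonterminal label $Y$, I substitute the subtree rooted at $s$ by the subtree rooted at $t$. The result remains a valid partial parse tree, because the new subtree at position $s$ still has root label $Y$, so it is compatible with the production applied at $s$'s parent; and the distinguished $X$-leaf is preserved, since $t^*$ sat inside the subtree of $t$ which is now re-attached higher up. Each such substitution strictly decreases the number of nodes, so the procedure terminates, and we end up with a partial parse tree whose root-to-$X$ spine uses each nonterminal at most once. In particular, this spine contains at most $|V|+1$ nodes.

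For the second stage, Chomsky normal form ensures that every internal node has at most two children, so each spine node except $t^*$ itself contributes at most one off-spine child, giving at most $|V|$ off-spine subtrees in total. Each such subtree has a nonterminal root $Y$; since $Y$ is productive, a complete parse tree for $G[Y]$ exists, and applying the same ancestor/descendant shortening trick to each of its root-to-leaf paths turns it into a complete elementary parse tree, which has size at most $2^{|V|+1}$ by the bound already recorded in the text. Replacing every off-spine subtree in this way produces a parse tree whose only non-terminal leaf is $X$, hence whose yield lies in $A^* X A^*$, and whose total size is bounded by $(|V|+1) + |V|\cdot 2^{|V|+1} \leq 4^{|V|+1}$, as required. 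The only delicate point in the argument is ensuring that spine-shortening does not destroy the $X$-leaf, which is precisely why the substitution is always oriented from an ancestor to a descendant on the chosen path; the remainder is routine bookkeeping on binary trees in Chomsky normal form.
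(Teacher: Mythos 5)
Your proof is correct and follows essentially the same construction as the paper: pick a short spine of at most $|V|$ distinct nonterminals from $S$ down to an $X$-labelled leaf (the paper constructs this directly, you obtain it by collapsing repeated labels on a root-to-$X$ path), then attach a complete elementary parse tree of size at most $2^{|V|+1}$ at each off-spine position, giving the same $4^{|V|+1}$ bound. The only cosmetic difference is that your spine count is $|V|+1$ rather than the paper's $|V|$, which is slightly loose but still comfortably within the stated bound.
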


\begin{proof}[of \cref{thm:derivewitness}]
We can assume that $k\geq 1$ since otherwise the proof is trivial. Observe that
if $\displ[^{G_1}]{}+\cdots+\displ[^{G_k}]{}<+\infty$ then
$\displ[^{G_j}]{}<+\infty$ for every $j$. It follows from \cref{lem:elementary}
that there exists a complete parse tree $T_j$ for $G[S_j]$ with a yield $w_j$
satisfying $\displ[^{G_j}]{}=\sum w_j$ and a number of nodes bounded by
$2^{|V|+1}$. Thus $|T_1|+\cdots+|T_k|\leq k2^{|V|+1}$ and $\sum w_1\ldots
w_k=\displ[^{G_1}]{}+\cdots+\displ[^{G_k}]{}$.  So, in this special case the
theorem is proved. Now, let us assume that
$\displ[^{G_1}]{}+\cdots+\displ[^{G_k}]{}=+\infty$.
There exists $p\in\{1,\ldots,k\}$ such that $\displ[^{G_p}]{}=+\infty$.
\Cref{lem:displinfinite} shows that there exists a variable for $X$ derivable
from $S_p$ and a parse tree $T_+$ for $G[X]$ with a yield $uXv$ satisfying
$u,v\in A^*$, $\sum uv>0$, and such that $|T_+|\leq 4^{|V|+1}$. Since $S_j$ is
productive, there exists a complete elementary parse tree $T_j$ for $G[S_j]$
with a yield $w_j\in A^*$. For the same reason, there exists a complete
elementary parse tree $T$ for $G[X]$ with a yield $w\in A^*$. As $X$ is derivable
from $S$, \cref{lem:derivable} shows that there exists a parse tree $T'$ for
$G$ with a yield labeled by a word in $u' X v'$ with $u',v'\in A^*$, and a
number of nodes bounded by $4^{|V|+1}$. Notice that for any $n\in\setN$, we
deduce a complete parse tree $T_p$ for $G[S_p]$ with a yield $w_p=u'u^nw v^nv'$
by inserting in $T'$ many ($n$) copies of $T_+$ and one copy of $T$. Observe
that $\sum w_1\ldots w_k\geq -|w_1\ldots w_{p-1}w_{p+1}\ldots
w_k|-|u'wv'|+n\sum uv\geq -k2^{|V|+1}-4^{|V|+1}+n$. Let us fix $n$ to
$2k4^{|V|+1}-2$. It follows that $\sum w_1\ldots w_p>0$. Moreover, we have
$|T_p|\leq |T|-1+n(|T_+|-1)+|T'|\leq 2^{|V|+1}+n4^{|V|+1}+ 4^{|V|+1} \leq (n+2)4^{|V|+1}\leq 2k4^{|V|+1}$.
We derive $|T_1|+\cdots+|T_k|\leq (k-1)2^{|V|+1}+ 2k4^{|V|+1}\leq 3k4^{|V|+1}$.
We have proved \cref{thm:derivewitness}. \qed
\end{proof}

\section{Small Certificates}
\label{sec:small_certs}
We provide in this section exponential bounds on the height and
input/output values of minimal certificates
in the following sense.
Let the \emph{rank} of a flow tree
$(T, \lsymoperator, \linoperator, \loutoperator)$
be the pair
$$
\left(
  \card{T_{\linoperator}} \,+\, \card{T_{\loutoperator}}
  \ ,\ 
  \sum_{t \in T_{\linoperator}} \lin{t} \,+\, \sum_{t \in T_{\loutoperator}} \lout{t})
\right)
$$
where
$T_{\linoperator} = \{t \in T \mid \lin{t} > -\infty\}$
and
$T_{\loutoperator} = \{t \in T \mid \lout{t} > -\infty\}$.
Notice that $T_{\loutoperator} \subseteq T_{\linoperator}$.
We compare ranks using the 
lexicographic order $\lex$ over $\setN^2$
and let the rank of a certificate $(\mathcal{T}, s, t)$ be
the rank of its flow tree $\mathcal{T}$.

\medskip
Consider a prefix-closed GVAS $G = (V,A,R,S,c_\init)$ that is unbounded.
By \cref{thm:certs}, there exists a certificate for $G$.
Pick a certificate
$(\mathcal{T}, s, t)$
among those of least rank.
Our goal is to bound the height and input/output values of $\mathcal{T}$.
Based on its assumed minimality, we observe a series of facts about
our chosen certificate.

\medskip
First, we observe that some input/output values in $\mathcal{T}$
must be $-\infty$, because higher values would be useless
in the sense that they can be set to $-\infty$ without
breaking the flow conditions nor the conditions on $s$ and $t$.
This observation is formalized in the two following facts.

\begin{fact}
  \label{fact:dropped-values-1}
  It holds that $\lout{p} = -\infty$ for every proper ancestor $p \pprefix s$.
  Moreover,
  $\lin{p} = \lout{p} = -\infty$ for every node $p \in T$ such that
  $s \lexst p$ and $p \not\prefix s$.
\end{fact}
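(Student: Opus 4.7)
The plan is to argue by contradiction from the minimality of the rank of $(\mathcal{T}, s, t)$. Assume the fact fails, that is, some value claimed to be $-\infty$ is in fact finite in $\mathcal{T}$. I would construct a modified flow tree $\mathcal{T}'$ on the same underlying parse tree by setting $\lout{p}$ to $-\infty$ for every proper ancestor $p \pprefix s$, and setting both $\lin{p}$ and $\lout{p}$ to $-\infty$ for every node $p$ satisfying $s \lexst p$ and $p \not\prefix s$ (call this set $R$); all other values are preserved. The claim is then that $(\mathcal{T}', s, t)$ is still a certificate and has strictly smaller rank, contradicting the choice of $(\mathcal{T}, s, t)$.

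The bulk of the argument is verifying that $\mathcal{T}'$ satisfies the flow conditions. The initial equality $\lin{\varepsilon} = c_\init$ survives because $\varepsilon$ is never modified. The key is the following closure property of the ``modified region'': (i) every descendant of a node in $R$ lies in $R$; (ii) if $p \pprefix s$ has children $p0, \ldots, pk$ and $pj$ is the unique child on the path to $s$, then $p(j+1), \ldots, pk$ all lie in $R$; and (iii) if $q$ has last child $qk$ with $qk \in R$ or $qk \pprefix s$, then $\lout{q}$ is itself modified (a case analysis on where $s$ and $qk$ first diverge shows that $q$ is either a proper ancestor of $s$ or lies in $R$). With these, each flow inequality $\lin{q(j+1)} \leq \lout{qj}$ in $\mathcal{T}'$ falls into one of three cases: both sides unchanged and the inequality is inherited from $\mathcal{T}$; the left-hand side is $-\infty$ and the inequality is trivial; or the right-hand side is $-\infty$, in which case (ii) or (i) forces the left-hand side to $-\infty$ as well. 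The remaining inequalities $\lin{q0} \leq \lin{q}$ and $\lout{q} \leq \lout{qk}$, and the leaf conditions, are handled by the analogous dichotomy using (i) and (iii); in particular, no leaf can be a proper ancestor of $s$, so if $\lin{p}$ is modified at a leaf, then $p \in R$ and hence $\lout{p}$ is modified too.

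The certificate conditions are preserved because none of $\lin{s}, \lout{s}, \lin{t}, \lout{t}$ are modified: $s$ is not a proper ancestor of itself and the relation $s \lexst s$ fails; and $t$ lies strictly inside the subtree of $s$, so $s \prefix t$ places $t$ outside both $R$ and the set of proper ancestors of $s$. Hence $\lsym{s} = \lsym{t}$, $\lin{s} \leq \lin{t}$, and the strict part of the certificate condition carry over to $\mathcal{T}'$.

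Finally, by the failure assumption some modified value was finite in $\mathcal{T}$ and is $-\infty$ in $\mathcal{T}'$, so $T_{\linoperator} \cup T_{\loutoperator}$ strictly shrinks and the first coordinate of the rank strictly decreases, yielding the desired contradiction. The step requiring the most care will be the closure property of $R$, which relies on the interplay between the prefix and lex orders on the tree; once that is cleanly established, the rest of the verification reduces to a routine case analysis.
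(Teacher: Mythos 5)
Your proposal is correct and spells out, with the necessary closure analysis of the modified region, the argument that the paper only sketches informally; the paper gives no separate proof of \cref{fact:dropped-values-1}, merely the preceding remark that such values can be set to $-\infty$ without breaking anything. One small wrinkle in your transcription of the set $R$: the condition $p\not\prefix s$ is already implied by $s\lexst p$, so taken literally $R$ would contain $t$ and every proper descendant of $s$; your later argument that $s\prefix t$ keeps $t$ outside $R$ shows that you actually work with the intended condition $s\not\prefix p$, which is what correctly excludes the subtree below $s$ and lets the certificate conditions on $s$ and $t$ survive the modification.
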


\begin{fact}
  \label{fact:dropped-values-2}
  Assume that $\lin{s} < \lin{t}$.
  It holds that $\lout{p} = -\infty$ for every ancestor $p \prefix t$.
  Moreover,
  $\lin{p} = \lout{p} = -\infty$ for all $p \in T$ with $t \lexst p$.
\end{fact}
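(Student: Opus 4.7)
The plan is to argue by contradiction from the minimality of the rank of $(\mathcal{T}, s, t)$. Suppose the conclusion fails: there is an ancestor $p \prefix t$ with $\lout{p} > -\infty$, or a node $q$ with $t \lexst q$ and $\lin{q} > -\infty$ or $\lout{q} > -\infty$. I build a new flow tree $\mathcal{T}'$ with the same underlying parse tree and labeling $\lsymoperator$ but with modified input/output functions: set the output to $-\infty$ at every ancestor $p \prefix t$ (including $t$ itself), set both input and output to $-\infty$ at every $q$ with $t \lexst q$, and leave all remaining values unchanged. The intuition is that, under the assumption $\lin{s} < \lin{t}$, the certificate condition makes no demand on the output at $t$ or at any ancestor of $t$, nor on any value at lex-later nodes, so all of this information is ``wasted'' and can safely be dropped.

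The main verification is that $\mathcal{T}'$ is still a flow tree. For any internal node $t'$ off the path from $\varepsilon$ to $t$, either $t'$ lies lex-before $t$, in which case the whole subtree below $t'$ keeps its original values and the flow conditions transfer verbatim from $\mathcal{T}$; or $t \lexst t'$, in which case every $\lin$ and $\lout$ appearing in $t'$'s flow conditions has been set to $-\infty$ and the conditions hold trivially. The nontrivial case is $t' \prefix t$: there is a unique child index $j$ with $t'j \prefix t$; children $t'0, \ldots, t'(j-1)$ are untouched, $t'j$ has its input preserved but its output set to $-\infty$, and every subsequent child has both values $-\infty$. One then checks the three flow inequalities at $t'$: $\lin{t'0} \leq \lin{t'}$ is inherited, $\lout{t'} \leq \lout{t'k}$ holds because the new $\lout{t'}$ equals $-\infty$, and each inter-sibling condition $\lin{t'(j'+1)} \leq \lout{t'j'}$ is either inherited (for $j' < j$) or reduces to $-\infty \leq -\infty$ (for $j' \geq j$). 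Finally, the root condition $\lin{\varepsilon} = c_\init$ is preserved because the procedure never touches any $\lin$ value at an ancestor of $t$, only $\lout$; this uses $t \neq \varepsilon$, which follows from $s \pprefix t$.

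Both $s$ and $t$ satisfy $s \prefix t$ and $t \prefix t$, so neither is lex-strictly after $t$, and the quantities $\lin{s}, \lin{t}, \lsym{s}, \lsym{t}$ are all preserved. Hence the certificate conditions $\lsym{s} = \lsym{t}$ and $\lin{s} < \lin{t}$ still hold in $\mathcal{T}'$, making $(\mathcal{T}', s, t)$ a certificate. By the failure hypothesis at least one previously finite value has been dropped to $-\infty$, so the first coordinate $\card{T_{\linoperator}} + \card{T_{\loutoperator}}$ of the rank strictly decreases and the rank of $\mathcal{T}'$ is strictly smaller than that of $\mathcal{T}$ in the lex order on $\setN^2$, contradicting the minimality choice. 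The main delicacy in the argument is orchestrating the simultaneous truncation of $\lout$ at every ancestor of $t$ together with $\lin$ and $\lout$ at every lex-later node, so that the flow inequalities along the path to $t$ remain balanced; the key observation making this work is that once a node lies on the path to $t$, everything strictly to its right in the tree is also being erased to $-\infty$.
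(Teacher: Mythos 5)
Your proof is correct and takes essentially the same approach as the paper's implicit justification (the paper states Facts~3 and~4 with only the informal remark that the corresponding values ``can be set to $-\infty$ without breaking the flow conditions nor the conditions on $s$ and $t$''; no detailed proof is given in the appendix). You flesh out exactly this idea: truncate the offending labels to $-\infty$, verify the flow and certificate conditions survive, and invoke minimality of rank. Two small details are glossed over but are easily repaired: your case analysis for nodes $t' \prefix t$ implicitly assumes $t' \pprefix t$ (the phrase ``there is a unique child index $j$ with $t'j \prefix t$'' has no such $j$ when $t' = t$), so the node $t$ itself needs its own one-line check -- which is trivial since all of $t$'s children are lex-strictly after $t$ and thus have both labels $-\infty$; and you only verify the flow conditions for internal nodes, whereas the leaf condition $\lout{q} \le \lin{q} + a$ should also be observed to hold (trivially, since any affected leaf has both labels $-\infty$). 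Neither affects the soundness of the argument.
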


Next, we observe that the main branch, that contains $s$ and $t$,
must be short.

\begin{restatable}{fact}{FactBranchDepthBound}
  \label{fact:branch-depth-bound}
  It holds that $\len{s} \leq \card{V}$ and $\len{t} \leq \len{s} + \card{V} + 1$.
\end{restatable}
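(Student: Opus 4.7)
The plan is to proceed by contradiction, combining a pigeonhole argument on the main branch of $\mathcal{T}$ with exchange arguments based on Facts~\ref{fact:dropped-values-1} and~\ref{fact:dropped-values-2}. To make minimality bite in every sub-case, I will take $(\mathcal{T}, s, t)$ of minimum rank and, as a tie-breaker, of minimum tree size among least-rank certificates.

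For the bound $\len{s} \leq \card{V}$, suppose $\len{s} > \card{V}$. The $\len{s}$ proper ancestors of $s$ carry labels in $V$, so pigeonhole produces $p_1 \pprefix p_2 \pprefix s$ with $\lsym{p_1} = \lsym{p_2}$. Fact~\ref{fact:dropped-values-1} gives $\lout{p_1} = \lout{p_2} = -\infty$. If $\lin{p_2} \leq \lin{p_1}$, I replace the subtree rooted at $p_1$ by the subtree rooted at $p_2$: the flow conditions at the parent of $p_1$ are preserved (the input constraint tightens, both outputs remain $-\infty$), and the shifted pair $(s', t')$ retains the labels and values of $(s, t)$, so it is still a certificate --- but the tree strictly shrinks, contradicting the tie-broken minimality. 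Otherwise $\lin{p_1} < \lin{p_2}$, so $(\mathcal{T}, p_1, p_2)$ is itself a certificate for which the hypothesis of Fact~\ref{fact:dropped-values-2} holds. The reduction reasoning underlying that fact then either yields a certificate of strictly smaller rank for $(p_1, p_2)$ (contradicting rank-minimality of $(\mathcal{T}, s, t)$), or forces $\lin{p} = \lout{p} = -\infty$ for every $p$ with $p_2 \lexst p$. In the latter case $s$ and $t$, as descendants of $p_2$, satisfy $\lin{s} = \lout{s} = \lin{t} = \lout{t} = -\infty$, so neither strict inequality of the certificate definition can hold --- contradicting that $(\mathcal{T}, s, t)$ is a certificate at all.

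The second bound $\len{t} \leq \len{s} + \card{V} + 1$ is obtained by the same scheme on the segment of the main branch strictly below $s$. If $\len{t} - \len{s} > \card{V} + 1$, pigeonhole produces $q_1 \pprefix q_2$ with $s \pprefix q_1$, $q_2 \prefix t$, and $\lsym{q_1} = \lsym{q_2}$; the substitution and alternative-pair cases then run as above. When the certificate's strict inequality is $\lin{s} < \lin{t}$, Fact~\ref{fact:dropped-values-2} directly yields $\lout{q_1} = \lout{q_2} = -\infty$ (both $q_i$ are ancestors of $t$), so the two cases on $\lin{q_2}$ vs.\ $\lin{q_1}$ go through verbatim, the alternative-pair sub-case producing $\lin{t} = -\infty$ which contradicts $\lin{s} < \lin{t}$. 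When instead the strict inequality is $\lout{t} < \lout{s}$, a dual argument exploiting the upward propagation of $\lout$ through last children handles the case symmetrically.

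The main obstacle throughout is the alternative-pair case: it hinges on the reduction reasoning that proves Facts~\ref{fact:dropped-values-1} and~\ref{fact:dropped-values-2} being applicable not only to the fixed witness $(s, t)$ but to any candidate pair like $(p_1, p_2)$ or $(q_1, q_2)$. This is plausible since that reasoning is entirely local to the chosen witness, but must be checked carefully, together with the secondary minimization by tree size that rescues the substitution step when no value drops strictly.
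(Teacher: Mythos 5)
Your Part~1 argument is essentially sound and, with your tie-breaker on tree size, reaches the same conclusion as the paper, though the paper argues more directly: in your ``alternative-pair'' case $\lin{p_1}<\lin{p_2}$, the paper simply sets $\lin{t}=-\infty$ and propagates, observing $\lin{t}>-\infty$ (because otherwise $\lin{s}\leq\lin{t}=-\infty$ would force all of $s$'s subtree, including $t$, to carry $-\infty$ values, contradicting certificateness), so the rank drops strictly and $(\mathcal{T}',p_1,p_2)$ is still a certificate. Your detour through ``either the rank drops or everything was already $-\infty$'' works, but it is the same idea.

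The genuine gap is in Part~2. You split only on which of the two certificate inequalities is strict, and for the case $\lin{s}=\lin{t}$ and $\lout{t}<\lout{s}$ you wave at ``a dual argument exploiting upward propagation of $\lout$.'' This does not go through. The paper's proof of Part~2 needs \emph{four} cases on the pigeonhole pair $s\pprefix p\pprefix q\pprefix t$, keyed on $\lin{p}$ vs.\ $\lin{q}$ and, when those are equal, on $\lout{p}$ vs.\ $\lout{q}$. The cases $\lin{p}<\lin{q}$, and $\lin{p}=\lin{q}$ with $\lout{q}<\lout{p}$, are handled by dropping values to $-\infty$; the case $\lin{p}=\lin{q}$ with $\lout{q}\geq\lout{p}$ is the substitution you describe. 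But the remaining case $\lin{p}>\lin{q}$ is not a mirror image of any of the others: here replacing the subtree at $p$ by that at $q$ may raise $\lout{p}$ while lowering $\lin{p}$, so the substitution is not obviously flow-compatible, and no $-\infty$-propagation applies since the slack is in the ``wrong'' direction. The paper handles it with a dedicated operation: collapse $p$ and $q$ while preserving $\lin{p}$, and \emph{re-label} all lexicographically later nodes with the largest values the flow conditions allow, which uniformly shifts the $q$-subtree upward so that the new $t'$ satisfies $\lin{t'}>\lin{s}$, producing a fresh certificate $(\mathcal{T}',s,t')$ of smaller rank. Nothing in your ``dual'' sketch produces this shift, and without it the case $\lin{p}>\lin{q}$ (which is perfectly possible when $\lin{s}=\lin{t}$) is left unresolved. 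Also note that in this regime you cannot invoke \cref{fact:dropped-values-2} at all, so the outputs $\lout{q_1},\lout{q_2}$ need not be $-\infty$ and the flow-compatibility of your substitution at the parent of $q_1$ is no longer automatic.
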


The next two facts provide \emph{relative} bounds on input and output values
for nodes that are not on the main branch.

\begin{restatable}{fact}{FactTraversalInequation}
  \label{fact:traversal-inequation}
  It holds that $\lin{p} \leq \lout{p} + 2^{\card{V}}$
  for every node $p \in T$ with $p \not\prefix t$.
\end{restatable}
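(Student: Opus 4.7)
The approach I plan is proof by contradiction, leveraging the rank-minimality of $(\mathcal{T}, s, t)$. Assume for contradiction that some node $p \in T$ with $p \not\prefix t$ satisfies $\lin{p} > \lout{p} + 2^{\card{V}}$; in particular, $\lin{p}$ is finite. My goal is to modify $\mathcal{T}$ locally to produce another flow tree, still a certificate witnessed by the same pair $(s, t)$, whose rank is strictly smaller.

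The key structural observation is that $s \pprefix t$, so every ancestor of $s$ is also an ancestor of $t$. Thus $p \not\prefix t$ entails $p \not\prefix s$, and the subtree of $T$ rooted at $p$ contains neither $s$ nor $t$. I can therefore freely replace both the shape and the flow annotation of this subtree without disturbing the certificate conditions of \cref{def:certificate}; this is the degree of freedom I will exploit.

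The candidate replacement comes from \cref{lem:elementary} applied to $G[\lsym{p}]$: a complete elementary parse tree $T'$ of size at most $2^{\card{V}+1}$, whose yield $w' \in A^*$ has length at most $2^{\card{V}}$ and therefore satisfies $\sum w' \geq -2^{\card{V}}$. I set the new root input to $\lin{p}' \eqdef \max(0,\, \lout{p} - \sum w')$, so that $\lin{p}' \leq \lout{p} + 2^{\card{V}} < \lin{p}$, and annotate $T'$ tightly by tracing the DFS order. Because $\lin{p}' + \sum w' \geq \lout{p}$, the running counter at the rightmost leaf reaches $\lout{p}$, so the root output can be kept equal to $\lout{p}$ and all flow conditions are satisfied.

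What remains---and what I expect to be the crux---is verifying that this swap strictly decreases the lexicographic rank. I would split on the size of the original subtree rooted at $p$. If it already has at most $\card{T'}$ nodes, then its own yield $z$ satisfies $\sum z \geq -2^{\card{V}}$, so I can instead keep the subtree unchanged and merely re-annotate it with root input $\max(0,\, \lout{p} - \sum z) < \lin{p}$; propagating this reduction down uniformly lowers internal $\lin$-values and only weakens $\lout$-values, strictly decreasing the sum while introducing no new finite values. In the complementary case, the original subtree is larger than $T'$, and replacing it with $T'$ shrinks the node count. Combined with the fact that the $T'$-based annotation contains at most $2\card{T'}$ finite values each bounded by $\lout{p} + 2^{\card{V}+1}$, a counting argument exploiting the overall rank-minimality forces the finite-value count to drop strictly (otherwise the original subtree could already have been replaced beforehand to reduce rank). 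Either way the rank strictly decreases, contradicting minimality and proving the bound.
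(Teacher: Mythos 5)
Your overall strategy---assume $\lin{p} > \lout{p} + 2^{\card{V}}$, then shrink either the annotation or the subtree rooted at $p$, splitting on its size---is the same as the paper's. But the rank argument in your large-subtree case has a genuine gap. The key missing observation is that the flow conditions chain backward through the DFS order: $\lout{t}\le\lout{tk}$, $\lin{t(j+1)}\le\lout{tj}$, $\lin{t0}\le\lin{t}$, and $\lout{t}\le\lin{t}+a$ at leaves. Hence $\lout{p}>-\infty$ forces \emph{every} input and output value in the entire subtree of $p$ to be finite (and if $\lout{p}=-\infty$ the bound is trivial: set $\lin{p}$ and everything below it to $-\infty$, which already drops the rank). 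This is precisely what makes the count comparison go through: since all old values in $p$'s subtree were finite, replacing it by a strictly smaller elementary subtree (tightly annotated, hence also all finite) strictly decreases $\card{T_{\linoperator}}+\card{T_{\loutoperator}}$. Your version instead argues ``otherwise the original subtree could already have been replaced beforehand to reduce rank,'' which is circular: it presupposes the very rank decrease you are trying to establish, and says nothing about why swapping in a fully-finite annotation cannot raise the count of finite values.

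Your small-subtree case also has a technical slip. With $\lin{p}'=\max(0,\lout{p}-\sum z)$ the tight DFS annotation can dip below $0$ on a prefix of $z$ that descends before rising, and any slot whose tight value would be negative must instead be $-\infty$, which then propagates forward and kills $\lout{p}$. Take $\lin{p}'=\lout{p}+2^{\card{V}}$ instead, which is still $<\lin{p}$ by the contradiction hypothesis and keeps the running counter $\ge\lout{p}\ge 0$ throughout (since the yield has length at most $2^{\card{V}}$); then set each new value to the minimum of its old value and the tight value from $\lin{p}'$, so that the sum strictly drops, the flow conditions are preserved, and no $-\infty$ is ever turned into a finite value.
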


\begin{restatable}{fact}{FactDescentOutsideMB}
  \label{fact:descent-outside-main-branch}
  Let $q \in T$ and let $p$ be the parent of $q$.
  If $p = t$ or $p \not\prefix t$,
  then $\lout{q} \leq \lout{p} + 2^{\card{V}}$.
  If moreover $\lsym{p} = \lsym{q}$,
  then $\lout{q} < \lout{p}$.
\end{restatable}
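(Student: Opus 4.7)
The plan is to exploit the minimality of the rank of $(\mathcal{T}, s, t)$: if either inequality in the fact were violated, one could construct a certificate of strictly smaller rank, contradicting minimality.

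For Part~1, I case-split on whether $q$ is the last child of $p$ in the Chomsky-normal-form parse tree. If $q$ is the last (or only) child of $p$, the flow condition gives $\lout{p} \leq \lout{q}$, and I claim $\lout{q} = \lout{p}$: if instead $\lout{q} > \lout{p}$, one could decrease $\lout{q}$ to $\lout{p}$ while preserving the upper bound $\lout{q} \leq \lout{q'}$ from $q$'s last child $q'$ (the original $\lout{q}$ already satisfied $\lout{q} \leq \lout{q'}$) and leaving the certificate conditions at $(s,t)$ unaffected, thereby strictly reducing the rank. If $q$ is not the last child, then in CNF $q = p \cdot 0$ and $p$ has a second child $r = p \cdot 1$, and the binding constraint is $\lin{r} \leq \lout{q}$. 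Applying the previous subcase to $r$ yields $\lout{r} = \lout{p}$; since $r \not\prefix t$ (as $p$ is $t$ or off the main branch), \cref{fact:traversal-inequation} gives $\lin{r} \leq \lout{r} + 2^{\card{V}} = \lout{p} + 2^{\card{V}}$. Minimality forces $\lout{q} = \lin{r}$, proving the bound.

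For Part~2, I argue by contradiction. Assume $\lsym{p} = \lsym{q}$ and $\lout{q} \geq \lout{p}$ (with $\lout{p}$ finite; otherwise Part~1 forces $\lout{q} = -\infty$ too, and the strict inequality is degenerate). I construct a smaller-rank certificate by a \emph{shortcut}: replace the subtree rooted at $p$ with the subtree rooted at $q$, placing the new root at position $p$. Since $\lsym{p} = \lsym{q}$, this yields a valid parse tree. The old values of $\lin$ and $\lout$ at position $p$ are retained (possibly further decreasing $\lout{p}$ if needed, which only helps minimality), and the flow conditions at the new root are satisfied---using $\lout{p} \leq \lout{q} \leq \lout{q'}$ for the upper bound (where $q'$ is $q$'s last child) and cascading decreases for the first-child $\lin$ constraint. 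The certificate conditions at $(s,t)$ are preserved because $s$ lies outside the modified subtree; if $p = t$, the label at $t$ is unchanged and any decrease in $\lout{t}$ still respects the active strict inequality $\lout{t} < \lout{s}$. The rank change stems from the disappearance of $q$ and of $q$'s sibling subtree within $T_p$: since $\lout{q}$ is finite (by Part~1 and the assumption), this strictly decreases the cardinality component of the rank, contradicting minimality. Hence $\lout{q} < \lout{p}$.

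The main obstacle will be the precise bookkeeping for the shortcut modification in Part~2: verifying the flow tree's validity under all configurations (especially when $p = t$ or $p$ is deep off the main branch) and checking that the removed ``cousin'' nodes together with $q$ contribute enough finite labels to strictly decrease the rank. The overall structure---reducing values by minimality, leveraging \cref{fact:traversal-inequation}, and performing local parse-tree surgery---mirrors the preceding facts in this section.
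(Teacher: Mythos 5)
Your Part~1 matches the paper's argument: the same case split on whether $q$ is the last child, the same use of minimality to pin down $\lout{q}$ (equal to $\lout{p}$ in the last-child case, equal to $\lin{p1}$ in the first-child case), and the same appeal to \cref{fact:traversal-inequation} to obtain the additive $2^{\card{V}}$ slack. Good.

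Your Part~2, however, departs from the paper and has a gap. The paper splits into two cases according to how $\lin{p}$ compares to $\lin{q}$. When $\lin{p} \geq \lin{q}$, it performs exactly the subtree replacement you describe, and this is clean: the retained input $\lin{p}$ already dominates the new first child's input (which was $\leq \lin{q} \leq \lin{p}$), and the retained output $\lout{p}$ is already $\leq \lout{q} \leq \lout{\text{last child}}$; no cascading is required and the certificate at $(s,t)$ is untouched. When $\lin{p} < \lin{q}$, the paper does \emph{not} replace any subtree; it observes that the pair $(p,q)$ itself already satisfies the certificate conditions ($\lsym{p}=\lsym{q}$ and $\lin{p}<\lin{q}$), and it strictly lowers the rank by a separate, orthogonal modification (setting $\lin{t}$ to $-\infty$ and propagating), then declares $(\mathcal{T}',p,q)$ the new certificate.

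You instead attempt to cover both cases with the single subtree-replacement construction, relying on ``cascading decreases'' to repair the flow condition $\lin{\text{new first child}} \leq \lin{p}$. This is where the proof breaks: when $\lin{p} < \lin{q}$, those decreases can propagate onward through the sibling chain, dropping $\lout$ values all the way up — in particular $\lout{p}$ (hence $\lout{t}$ when $p=t$) and even $\lout{s}$ may be forced to $-\infty$. In that situation the pair $(s,t)$ no longer satisfies the certificate condition ``$\lin{s}<\lin{t}$ or $\lout{t}<\lout{s}$'', so you do not obtain a contradiction from minimality. Your sentence claiming the certificate at $(s,t)$ is preserved ``because $s$ lies outside the modified subtree'' ignores that the flow conditions couple output values across subtrees. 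To close the gap you should handle $\lin{p}<\lin{q}$ the way the paper does: notice this already yields a new certificate pair $(p,q)$, and reduce the rank by a modification that does not touch $p$ and $q$.
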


The following facts provide \emph{absolute} bounds
on the input/output values of nodes $s$ and $t$.
The proofs of the facts below %
crucially rely on \cref{sec:struct}.
Consider the subtrees on the left and on the right of the branch from $s$ to $t$.
The main idea of the proofs is to replace these subtrees by small ones
using \cref{thm:derivewitness}.

\begin{restatable}{fact}{FactBranchValuesBoundOne}
  \label{fact:branch-values-bound-1}
  It holds that $\lout{t} \leq \lout{s} \leq 6 \card{V} \cdot 4^{\card{V}+1}$.
\end{restatable}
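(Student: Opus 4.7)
The plan is to split on the certificate condition, dispose of the easy case via \cref{fact:dropped-values-2}, and in the remaining case use the minimality of the chosen certificate together with \cref{thm:derivewitness} and \cref{lem:elementary}.

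If $\lin{s} < \lin{t}$, then \cref{fact:dropped-values-2} forces $\lout{p} = -\infty$ for every ancestor $p \prefix t$, so $\lout{s} = \lout{t} = -\infty$ and both claims are immediate. Otherwise $\lin{s} = \lin{t}$, the certificate condition gives $\lout{t} < \lout{s}$ directly, hence $\lout{t} \leq \lout{s}$. It remains to bound $\lout{s}$ in this case.

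I set up notation by writing the main branch as $s = \eta_0 \pprefix \cdots \pprefix \eta_m = t$, with $m \leq \card{V}+1$ by \cref{fact:branch-depth-bound}. Since the grammar is in Chomsky normal form, each internal $\eta_i$ has exactly two children: $\eta_{i+1}$ and a sibling that I call $R_i$ when $\eta_{i+1}$ is the left child. Let $v$ be the concatenation of the yields of the subtrees rooted at the $R_i$'s and let $w$ be the yield of the subtree at $t$. Tracing the flow inequalities along the branch gives
\[
\lout{s} \,\leq\, \lout{t} + \sum v \,\leq\, \lin{t} + \sum w + \sum v \,=\, \lin{s} + \sum w + \sum v,
\]
using $\lin{t} = \lin{s}$ in the last step.

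I then bound each summand by invoking minimality. The sequence of nonterminals labelling the $R_i$'s, of length $q \leq \card{V}+1$, admits by \cref{thm:derivewitness} replacement subtrees of total size at most $3q \cdot 4^{\card{V}+1}$ that still satisfy $\sum v > 0$; substituting them would strictly reduce the rank unless $\sum v \leq 3(\card{V}+1) \cdot 4^{\card{V}+1}$ already. Similarly, \cref{lem:elementary} lets me replace the subtree at $t$ by a complete elementary parse tree of size at most $2^{\card{V}+1}$, giving $|\sum w| \leq 2^{\card{V}+1}$. Finally, minimality sets $\lin{s}$ to the smallest value consistent with $\lout{t} \geq 0$, namely $\lin{s} \leq \max(0, -\sum w) \leq 2^{\card{V}+1}$. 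Summing the three bounds yields $\lout{s} \leq 2 \cdot 2^{\card{V}+1} + 3(\card{V}+1) \cdot 4^{\card{V}+1} \leq 6 \card{V} \cdot 4^{\card{V}+1}$ as claimed.

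The main difficulty lies in making each minimality substitution rigorous: for each one the argument must exhibit a valid flow tree that is still a certificate and whose rank is strictly smaller than the chosen one. This involves re-labelling $\lin$ and $\lout$ values along the main branch after each structural change and checking that the flow inequalities, the sign constraint $\sum v > 0$, and the certificate conditions $\lin{s} \leq \lin{t}$ and $\lout{t} < \lout{s}$ are preserved. The three substitutions interact---for instance, replacing the subtree at $t$ changes $\sum w$ and hence the allowed range of $\lin{s}$---so they must be coordinated.
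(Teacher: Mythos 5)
The proposal diverges from the paper at the key technical step, and the divergence introduces a genuine gap.

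You begin correctly: the case $\lin{s} < \lin{t}$ is disposed of by \cref{fact:dropped-values-2}, and in the case $\lin{s}=\lin{t}$ the inequality $\lout{t}\le\lout{s}$ is part of the certificate condition. But to bound $\lout{s}$ you decompose it as $\lout{s}\le\lin{s}+\sum w+\sum v$ and try to bound each summand independently by ``minimality substitutions.'' This decomposition forces you to bound $\lin{s}$, and your claim that $\lin{s}\le\max(0,-\sum w)$ is unjustified: $\lin{s}$ is constrained from below by \emph{all} the non-negativity constraints inside the subtree rooted at $s$ (including the prefix-sum behaviour of $u$, $w$, and $v$), not merely by $\lout{t}\ge 0$, and minimality of the rank does not let you drive $\lin{s}$ down to that threshold while keeping the rest of the flow tree valid. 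Indeed, the paper bounds $\lin{s}$ in a separate \cref{fact:branch-values-bound-3}, and that proof explicitly relies on the bound $K$ from \cref{fact:branch-values-bound-1}; using a bound on $\lin{s}$ to prove \cref{fact:branch-values-bound-1} would make the argument circular.

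The bounds on $\sum v$ and $\sum w$ are also not consequences of minimality as stated. Substituting the subtrees rooted at the $R_i$'s (or at $t$) by the small trees from \cref{thm:derivewitness} or \cref{lem:elementary} reduces the rank only if (i) the replacement has strictly fewer non-$(-\infty)$ nodes, and (ii) it can be embedded as a valid flow tree with non-negative labels, which in general requires $\lout{t}$ to already be at least as large as the total size of the replacement. Neither is automatic. The paper circumvents this with the observation $\lout{s}=\lout{t}+1$ (so $\lout{s}>K$ forces $\lout{t}\ge K$, validating the embedding) together with a case split on the total size of the right-hand subtrees: when they are small, all their labels and the output labels along the branch are strictly positive and can be uniformly decremented; when they are large, they can be replaced by the strictly smaller trees from \cref{thm:derivewitness}. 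Your proposal neither establishes $\lout{s}=\lout{t}+1$, nor performs this case split, nor ensures embeddability, and you explicitly note at the end that the three substitutions ``must be coordinated'' without doing so. That coordination is precisely the content of the paper's proof, so the proposal cannot be considered a complete argument.
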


\begin{restatable}{fact}{FactBranchValuesBoundThree}
  \label{fact:branch-values-bound-3}
  It holds that $\lin{s} \leq \lin{t} \leq 7 \card{V} \cdot 4^{\card{V}+1}$.
\end{restatable}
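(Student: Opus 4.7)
The first inequality $\lin{s}\le \lin{t}$ is immediate from \cref{def:certificate}. To bound $\lin{t}$, I would exploit the minimality of the rank of $(\mathcal{T},s,t)$ and split into two cases according to the certificate condition.

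In \textbf{Case 1} where $\lin{s}<\lin{t}$, \cref{fact:dropped-values-2} forces $\lout{t}=-\infty$, so the complete subtree rooted at $t$ can carry $-\infty$ on all output values. Combined with the lossy semantics and rank minimality, the value $\lin{t}$ descends to its minimum consistent with the certificate, namely $\lin{s}+1$, while $\lin{s}$ itself drops to $0$ by losing along the path from the root. Thus $\lin{t}\le 1$, which is well below $7\card{V}\cdot 4^{\card{V}+1}$.

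In \textbf{Case 2} where $\lin{s}=\lin{t}$ and $\lout{t}<\lout{s}$, \cref{fact:branch-values-bound-1} already gives $\lout{s}\le 6\card{V}\cdot 4^{\card{V}+1}$. The subtree of $\mathcal{T}$ rooted at $s$ is a complete parse tree for $\lsym{s}$ with a yield $z_s\in A^*$ satisfying the lossy inequality $\lout{s}\le \lin{s}+\sum z_s$, equivalently $\lin{s}\ge \lout{s}-\sum z_s$. My plan is to use rank minimality to guarantee that $\sum z_s$ is maximal among all subtree replacements that preserve the ``skeleton'': the two nodes $s$ and $t$, the main branch connecting them, and the complete subtree rooted at $t$. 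Specifically, each of the at most $\card{V}+1$ side subtrees flanking the main branch (bounded via \cref{fact:branch-depth-bound}) can be replaced, using \cref{lem:elementary}, by a complete elementary parse tree achieving the corresponding displacement, which has size at most $2^{\card{V}+1}$ and yield summing to a value in $[-2^{\card{V}+1},2^{\card{V}+1}]$ when finite. Either some involved displacement equals $+\infty$, in which case $\sum z_s$ is unbounded above and $\lin{s}=0$; or all displacements are finite and $\sum z_s \ge -(\card{V}+2)\cdot 2^{\card{V}+1}$, giving $\lin{s}\le \lout{s}+(\card{V}+2)\cdot 2^{\card{V}+1}\le 7\card{V}\cdot 4^{\card{V}+1}$ for $\card{V}\ge 1$, since $(\card{V}+2)\cdot 2^{\card{V}+1}\le \card{V}\cdot 4^{\card{V}+1}$.

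The main obstacle lies in rigorously formalizing the minimality argument: one must exhibit a concrete rank-reducing modification whenever a side subtree has non-maximal displacement, while verifying that the first coordinate of the rank (the count of finite input/output values) does not increase. This amounts to careful tree surgery, where on the rightmost branch of each replaced subtree the finite values are preserved (so as to connect with the flow along the main branch), and all other values are set to $-\infty$ to match or improve on the original. With this surgery in place, any failure of maximality for $\sum z_s$ produces a certificate of strictly smaller rank, contradicting the choice of $(\mathcal{T},s,t)$, and the claimed bound on $\lin{t}$ follows.
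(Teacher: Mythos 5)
The first inequality $\lin{s}\le\lin{t}$ and the step $\lin{t}\le\lin{s}+1$ via rank minimality are fine, but beyond that your argument diverges from a valid proof in several ways that I do not think can be repaired as written.

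The central logical problem is the direction of the inequality you extract in Case~2. The lossy flow conditions give $\lout{s}\le\lin{s}+\sum z_s$, equivalently $\lin{s}\ge\lout{s}-\sum z_s$; this is a \emph{lower} bound on $\lin{s}$, so a lower bound $\sum z_s\ge -(\card{V}+2)2^{\card{V}+1}$ cannot produce the claimed upper bound $\lin{s}\le\lout{s}+(\card{V}+2)2^{\card{V}+1}$. The paper instead bounds $\lin{t}$ (not $\lin{s}$) by following the flow \emph{forward} along the branch: $\lin{t}\le\lin{s}+\displ{\#_1\cdots\#_k}$ where $\#_1,\dots,\#_k$ are the labels of the \emph{left} children hanging off the branch from $s$ to the parent of $t$. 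This tells you that if $\lin{t}$ is at least $\lin{s}$, those left subtrees must have non-negative (resp.\ positive) total displacement, which is exactly the hypothesis of \cref{thm:derivewitness}. \Cref{lem:elementary} by itself does not let you control the sign of a \emph{sum} of several displacements, in particular when one of them is $+\infty$; that is what \cref{thm:derivewitness} is designed to handle.

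Your Case~1 shortcut is also not sound. Decreasing $\lin{s}$ to $0$ "by losing" cascades through the condition $\lin{s0}\le\lin{s}$ into the left subtrees and eventually into $\lin{s1},\dots$; some of these values may be forced to $-\infty$, which can propagate all the way to $\lin{t}$ and destroy the certificate. There is no reason the minimal certificate has $\lin{s}=0$ in this case.

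Finally, you are missing the subcase where the side subtrees are already small. If the combined size of the left subtrees along the branch is $\le K$, replacing them gains nothing; the paper handles this by observing that, since $\lin{t}>H$ and $A=\{-1,0,1\}$, all values along this part of the tree are strictly positive and can be uniformly decreased by $1$, which strictly lowers the rank. For that move to preserve the flow condition at $t$ one first needs the bound $\lin{t0}\le\lout{t}+2\cdot 2^{\card{V}}$, obtained from \cref{fact:traversal-inequation} and \cref{fact:branch-values-bound-1}; your proof never establishes this and therefore has no way to justify the decrement. Without handling this subcase, the rank-minimality argument does not go through.
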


Now we derive absolute bounds for the input/output values
of the remaining nodes on the main branch.
These are derived from
\cref{fact:branch-values-bound-1,fact:branch-values-bound-3},
using \cref{fact:traversal-inequation,fact:descent-outside-main-branch}
about the way in/output values propagate and the \cref{fact:branch-depth-bound}
that the intermediate path between nodes $s$ and $t$ is short.

\begin{restatable}{fact}{FactBranchValuesBoundTwo}
  \label{fact:branch-values-bound-2}
  It holds that $\lout{p} \leq 4^{2(\card{V}+1)}$
  for every ancestor $p \prefix t$.
\end{restatable}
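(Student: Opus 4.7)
The plan is to split on the two possibilities in \cref{def:certificate}: either $\lin{s} < \lin{t}$, or $\lin{s} = \lin{t}$ together with $\lout{t} < \lout{s}$. In the first case I would simply invoke \cref{fact:dropped-values-2}, which forces $\lout{p} = -\infty$ for every ancestor $p \prefix t$, making the bound immediate. So I focus on the second case. The ancestors $p \prefix t$ split naturally into three regions: proper ancestors of $s$ (covered by \cref{fact:dropped-values-1}, giving $\lout{p} = -\infty$); the node $s$ itself (covered by \cref{fact:branch-values-bound-1}, giving $\lout{s} \leq 6\card{V}\cdot 4^{\card{V}+1}$); and the nodes strictly between $s$ and $t$, together with $t$, which require real work.

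For this last region, I would proceed by upward induction along the branch from $t$ to $s$. The base case $p = t$ uses $\lout{t} < \lout{s} \leq 6\card{V}\cdot 4^{\card{V}+1}$, from the certificate condition and \cref{fact:branch-values-bound-1}. For the inductive step, fix $p$ with $s \pprefix p \pprefix t$ and let $q$ be the child of $p$ on the branch. Since $G$ is in Chomsky normal form, $p$ has at most two children. If $q$ is the rightmost child of $p$, the flow condition gives $\lout{p} \leq \lout{q}$ and the bound carries over unchanged. The only dangerous sub-case is when $q = p(0)$ and $p$ has an off-branch right sibling $r = p(1)$; the flow condition then only yields $\lout{p} \leq \lout{r}$, so the task reduces to bounding $\lout{r}$ in terms of $\lout{q}$.

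The heart of the argument — and the step where I expect the main difficulty — is a rank-minimality argument for the off-branch subtree rooted at $r$. Because $r$ lies off the branch, its subtree participates in the certificate only through the flow constraints $\lin{r} \leq \lout{q}$ upstream and $\lout{p} \leq \lout{r}$ downstream; in particular the certificate condition on $s,t$ is untouched by modifications inside this subtree. Using \cref{thm:derivewitness} with $k = 1$ applied to $G[\lsym{r}]$, one can construct a complete parse tree for $\lsym{r}$ of size at most $3\cdot 4^{\card{V}+1}$ whose yield has a sum of the appropriate sign. Substituting this smaller subtree and re-setting the in/out values along it to their minimum consistent values would produce a valid certificate of strictly smaller rank, contradicting minimality of $(\mathcal{T}, s, t)$ unless the original subtree was already essentially of that size. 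Since each leaf of such a size-bounded subtree contributes at most $+1$ to the counter, one obtains $\lout{r} \leq \lin{r} + 3\cdot 4^{\card{V}+1} \leq \lout{q} + 3\cdot 4^{\card{V}+1}$, and hence $\lout{p} \leq \lout{q} + 3\cdot 4^{\card{V}+1}$. Making the rank-reduction precise — in particular showing that the downstream constraint on $\lout{r}$ can still be met by the replacement, and that doing so really does strictly decrease one of the two components of the rank — is the most delicate part of the proof.

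Iterating this bound along the branch from $t$ to $s$, which by \cref{fact:branch-depth-bound} has length at most $\card{V}+1$, and combining with the base bound on $\lout{t}$, yields
\[
\lout{p} \,\leq\, 6\card{V}\cdot 4^{\card{V}+1} + (\card{V}+1)\cdot 3 \cdot 4^{\card{V}+1} \,=\, (9\card{V}+3)\cdot 4^{\card{V}+1},
\]
which is at most $4^{2(\card{V}+1)}$ for every $\card{V} \geq 0$. Together with the trivial bounds on the other two regions of ancestors, this establishes the claim.
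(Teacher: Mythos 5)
Your decomposition into cases and regions is fine, and the cases you dispatch immediately — the case $\lin{s}<\lin{t}$ via \cref{fact:dropped-values-2}, proper ancestors of $s$ via \cref{fact:dropped-values-1}, and $s$ itself via \cref{fact:branch-values-bound-1} — are all handled correctly. The trouble is entirely in the upward induction on the segment from $t$ to $s$.

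The paper goes in the \emph{opposite} direction: it accumulates the bound \emph{downward}, from $s$ to $t$. For $s\prefix p\pprefix t$ with on-branch child $p0$ and off-branch right sibling $p1$, minimality gives the equalities $\lout{p1}=\lout{p}$ and $\lout{p0}=\lin{p1}$, and \cref{fact:traversal-inequation} applied to the off-branch node $p1$ gives $\lin{p1}\leq\lout{p1}+2^{\card{V}}$. Chaining these yields $\lout{p0}\leq\lout{p}+2^{\card{V}}$, i.e.\ the \emph{child's} output is bounded by the \emph{parent's} output plus $2^{\card{V}}$. Summing down the branch from $\lout{s}$ and invoking \cref{fact:branch-depth-bound,fact:branch-values-bound-1} finishes the proof. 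Note that the only quantity that needs to be controlled for the off-branch sibling is $\lin{p1}-\lout{p1}$, which \cref{fact:traversal-inequation} controls.

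Your upward induction needs the reverse inequality: in the sub-case you call dangerous ($q=p0$ on-branch, $r=p1$ off-branch) you must bound $\lout{p}\leq\lout{r}$ in terms of $\lin{r}\leq\lout{q}$, i.e.\ you need an \emph{upper} bound on $\lout{r}-\lin{r}$. This is not the quantity \cref{fact:traversal-inequation} controls, and it cannot be bounded by the size of the GVAS alone: the displacement of a nonterminal can be Ackermannian (\cref{ex:gvas}), so the subtree rooted at $r$ can in principle realize a huge $\lout{r}-\lin{r}$. Your attempt to close this via \cref{thm:derivewitness} plus rank-minimality has a genuine gap. Replacing the subtree at $r$ by a small one generally shrinks $\lout{r}$ drastically; the flow condition $\lout{p}\leq\lout{r}$ then forces $\lout{p}$ to shrink, and this propagates up the main branch through $\lout{s}$ and $\lout{t}$. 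After that, the certificate condition $\lout{t}<\lout{s}$ (you are in the sub-case $\lin{s}=\lin{t}$) is no longer guaranteed to hold, so the modified object is not a certificate of smaller rank and minimality gives no contradiction. (Also, \cref{thm:derivewitness} with $k=1$ only controls the yield's sign when $\displ[^{G[\lsym{r}]}]{}\geq 0$ and gives no bound on the magnitude of the yield sum.) The fix is to abandon the upward induction and argue downward as the paper does: the required per-step inequality $\lout{p0}\leq\lout{p}+2^{\card{V}}$ follows from the already-available \cref{fact:traversal-inequation} combined with the minimality equalities along the main branch, with no new machinery needed.
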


\begin{restatable}{fact}{FactBranchValuesBoundFour}
  \label{fact:branch-values-bound-4}
  It holds that $\lin{p} \leq 4^{2(\card{V}+1)}$
  for every ancestor $\varepsilon \pprefix p \prefix t$.
\end{restatable}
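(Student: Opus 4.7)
The plan is to bound $\lin{p_i}$ along the main branch by a bottom-up induction from $t$ towards the root, showing that each upward step increases $\lin$ by at most $2^{\card{V}}$, and then invoking the base bound from \cref{fact:branch-values-bound-3}.

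Write the nodes on the main branch as $\varepsilon = p_0, p_1, \ldots, p_m = t$. By \cref{fact:branch-depth-bound}, $m \le 2\card{V}+1$. I would prove by downward induction on $i$ the step bound $\lin{p_i} \le \lin{p_{i+1}} + 2^{\card{V}}$. Combined with the base case $\lin{t} \le 7\card{V}\cdot 4^{\card{V}+1}$ from \cref{fact:branch-values-bound-3}, this gives
\[
\lin{p_i} \;\le\; 7\card{V}\cdot 4^{\card{V}+1} + (2\card{V}+1)\cdot 2^{\card{V}} \;\le\; 4^{2(\card{V}+1)},
\]
as required. Since the GVAS is in Chomsky normal form, for $i<m$ the node $p_i$ has exactly two children, and I would distinguish two cases for the position of $p_{i+1}$ among them.

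In the first case, $p_{i+1}$ is the leftmost child of $p_i$. The flow condition $\lin{p_{i+1}} \le \lin{p_i}$ is the only lower bound on $\lin{p_i}$. By minimality of the certificate's rank, $\lin{p_i} = \lin{p_{i+1}}$: if $\lin{p_i}$ were strictly larger, we could lower it down to $\lin{p_{i+1}}$ without breaking any flow or certificate condition, producing a certificate of strictly smaller rank. In the second case, $p_{i+1}$ is the rightmost child and the leftmost child $q := p_i 0$ lies off the main branch. The flow gives $\lin{q} \le \lin{p_i}$ and $\lin{p_{i+1}} \le \lout{q}$. By the same minimality argument, $\lin{p_i} = \lin{q}$ and $\lout{q} = \lin{p_{i+1}}$. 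Since $q \not\prefix t$, \cref{fact:traversal-inequation} yields $\lin{q} \le \lout{q} + 2^{\card{V}}$, and hence $\lin{p_i} \le \lin{p_{i+1}} + 2^{\card{V}}$.

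The main obstacle is the rigorous justification of the minimality claims, especially the equality $\lout{q} = \lin{p_{i+1}}$ in the second case, since lowering $\lout{q}$ may force a replacement of the subtree below $q$. This is where \cref{thm:derivewitness} enters: it supplies a small replacement parse tree for the subtree rooted at $q$ whose yield sum can be tuned so that the new $\lout{q}$ matches the required value $\lin{p_{i+1}}$. The same modification technique underlies the proofs of the preceding facts, so the argument here is a direct adaptation.
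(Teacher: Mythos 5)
Your proof is correct and follows essentially the same route as the paper: establish the step bound $\lin{p_i} \le \lin{p_{i+1}} + 2^{\card{V}}$ along the main branch via the flow conditions, the minimality of the certificate, and \cref{fact:traversal-inequation}, then telescope down from $\lin{t}$ using \cref{fact:branch-depth-bound} and \cref{fact:branch-values-bound-3}. (The paper splits the telescoping at $s$ into two pieces of length at most $\card{V}$ and $\card{V}+1$ and takes a max, whereas you sum from $t$ all the way to the root with coefficient $2\card{V}+1$; the arithmetic works either way.)

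Your closing worry, however, is unfounded, and the appeal to \cref{thm:derivewitness} is a red herring. Lowering $\lout{q}$ to $\lin{p_{i+1}}$ can never ``force a replacement of the subtree below $q$'': thanks to the lossy semantics, the only flow condition in which $\lout{q}$ appears as a \emph{lower} bound is $\lin{p_i 1} \le \lout{q}$ itself, and all conditions \emph{inside} the subtree rooted at $q$ bound $\lout{q}$ only from above ($\lout{q} \le \lout{q\,\text{last}}$, resp.\ $\lout{q} \le \lin{q}+a$ for a leaf). So decreasing $\lout{q}$ until equality holds preserves every flow condition and strictly decreases the rank, giving the desired minimality equality directly. \Cref{thm:derivewitness} is needed only for the \emph{absolute} bounds in \cref{fact:branch-values-bound-1} and \cref{fact:branch-values-bound-3}, where whole off-branch subtrees must be swapped out; nothing of that sort is required for the present fact.
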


We are now ready to derive bounds on the rank of our minimal certificate.
Notice that it remains only to bound the depth and the input/output values
on branches different from the main branch.

Consider therefore a node $q$ outside the main branch,
i.e.,
$q \not\prefix t$.
Let $p$ be the least prefix of $q$ such that $p = t$ or $p \not\prefix t$.
We first show that 
$\lout{p} \leq 4^{2(\card{V}+1)}$.
If $p = t$ then the claim follows from \cref{fact:branch-values-bound-2}.
Otherwise,
the parent $r$ of $p$ satisfies $r \pprefix t$.
Observe that the other child $\bar{p}$ of $r$
satisfies $\bar{p} \prefix t$.
The flow conditions together with
the minimality of $(\mathcal{T}, s, t)$ guarantee that
\begin{itemize}
\item
  if $p = r1$ then $\lout{p} = \lout{r}$,
  hence, $\lout{p} \leq 4^{2(\card{V}+1)}$
  by \cref{fact:branch-values-bound-2},
  and
\item
  if $p = r0$ then $\lout{p} = \lin{r1}$,
  hence, $\lout{p} \leq 4^{2(\card{V}+1)}$
  by \cref{fact:branch-values-bound-4}.
\end{itemize}
According to \cref{fact:descent-outside-main-branch},
the output values on the branch from $p$ down to $q$ may only increase
when visiting a new symbol.
Moreover, this increase is bounded by $2^{\card{V}}$.
It follows that $\lout{r} \leq \lout{p} + \card{V} 2^{\card{V}}$
for every node $r$ such that $p \pprefix r \prefix q$.
\cref{fact:traversal-inequation} entails that
$\lin{r} \leq \lout{p} + (\card{V} + 1) 2^{\card{V}}$.
We obtain that
$\max\{\lin{r}, \lout{r}\} < 4^{3(\card{V}+1)}$
for every node $r$ with $p \pprefix r \prefix q$.
\cref{fact:descent-outside-main-branch} also forbids
the same nonterminal from appearing twice with the same output value,
so
$\len{r} \leq \len{p} + \card{V}\cdot 4^{3(\card{V}+1)} + 1$.
Observe that $\len{p} \leq \len{t}$. %
We derive from \cref{fact:branch-depth-bound} that
$\len{r} \leq 4^{4(\card{V}+1)}$.
This concludes the proof of the following theorem.

\begin{theorem}
  \label{thm:small_certificates}
  A prefix-closed GVAS $(V,A,R,S,c_\init)$ is unbounded if, and only if,
  it admits a certificate with height and all input/output values
  bounded by $c_\init + 4^{4(\card{V}+1)}$.
\end{theorem}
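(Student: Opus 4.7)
The plan is to establish the nontrivial direction (unbounded $\Rightarrow$ small certificate), since the converse is immediate from \cref{thm:certs}: any certificate, small or otherwise, witnesses unboundedness. The strategy is to pick a certificate $(\mathcal{T}, s, t)$ of \emph{minimum} rank in the lexicographic order over $\setN^2$ and extract from this minimality a series of local constraints that jointly pin down the height and all input/output values by a singly exponential quantity in $\card{V}$.

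The first step is to use minimality to push ``useless'' values down to $-\infty$: any input/output value that could be lowered without breaking the flow conditions and the conditions on $s$ and $t$ must already equal $-\infty$ in our chosen tree (\cref{fact:dropped-values-1,fact:dropped-values-2}). This reduces the problem to controlling the nodes on the main branch from $\varepsilon$ down to $t$ and the finite subtrees hanging off that branch. Next, the length of the main branch itself must be short: two distinct nodes on it carrying the same nonterminal with comparable input values would let us cut and splice to obtain a strictly smaller certificate, forcing $\len{s} \leq \card{V}$ and $\len{t} \leq \len{s} + \card{V} + 1$ (\cref{fact:branch-depth-bound}).

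The central step, and the main obstacle, is to obtain \emph{absolute} bounds on $\lin{s}, \lin{t}, \lout{s}, \lout{t}$ (\cref{fact:branch-values-bound-1,fact:branch-values-bound-3}). This is where \cref{sec:struct} enters the picture: the subtrees hanging to the left and right of the main branch from $\varepsilon$ to $s$ (resp.\ $s$ to $t$) derive from sequences of nonterminals $S_1, \ldots, S_k$, and their contribution to the flow is controlled by their displacements $\displ[^{G_{S_j}}]{}$. \Cref{thm:derivewitness} supplies replacement parse trees of total size $O(\card{V} \cdot 4^{\card{V}+1})$ that achieve the same displacement sign; plugging them back in must produce a valid certificate of rank at least that of $\mathcal{T}$, which forces the input/output values at $s$ and $t$ to sit below $O(\card{V} \cdot 4^{\card{V}+1})$. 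The delicate part is arguing that the substitution preserves prefix-closedness and all flow conditions while remaining a certificate, so that minimality genuinely constrains the original values.

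Once $s$ and $t$ carry small values, they can be propagated up the main branch using \cref{fact:traversal-inequation,fact:descent-outside-main-branch} to obtain $\lin{p}, \lout{p} \leq 4^{2(\card{V}+1)}$ for every ancestor $p \prefix t$ (\cref{fact:branch-values-bound-2,fact:branch-values-bound-4}). For a node $q$ lying off the main branch, I would trace up to the least ancestor $p$ that is either $t$ itself or already off the branch; the flow conditions together with minimality identify $\lout{p}$ with some value on the main branch, so $\lout{p} \leq 4^{2(\card{V}+1)}$ as well. Descending from $p$ toward $q$, \cref{fact:descent-outside-main-branch} bounds each jump in $\lout{}$ by $2^{\card{V}}$ and allows at most $\card{V}$ fresh nonterminals before a repetition with the same output value would contradict minimality; \cref{fact:traversal-inequation} converts this into matching bounds on $\lin{}$. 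Summing up yields $\max(\lin{r}, \lout{r}) < 4^{3(\card{V}+1)}$ along this descent and $\len{q} \leq 4^{4(\card{V}+1)}$, and combined with $\lin{\varepsilon} = c_\init$ this gives exactly the claimed bound $c_\init + 4^{4(\card{V}+1)}$.
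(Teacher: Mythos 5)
Your proposal follows the paper's own proof essentially step for step: choose a rank-minimal certificate, use \cref{fact:dropped-values-1,fact:dropped-values-2} to set useless values to $-\infty$, bound the main-branch length via \cref{fact:branch-depth-bound}, invoke \cref{thm:derivewitness} inside \cref{fact:branch-values-bound-1,fact:branch-values-bound-3} to get absolute bounds at $s$ and $t$, propagate along the main branch via \cref{fact:branch-values-bound-2,fact:branch-values-bound-4}, and finally control off-branch nodes using \cref{fact:traversal-inequation,fact:descent-outside-main-branch} together with the observation that the anchor node $p$ inherits a bounded output from the main branch. The only minor imprecision is that the substitution in \cref{fact:branch-values-bound-1,fact:branch-values-bound-3} needs to preserve the flow conditions and the certificate property (there is no issue of preserving prefix-closedness of the grammar); the rest matches the paper's argument.
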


\begin{proof}[of \cref{thm:main}]
    By \cref{thm:small_certificates}, a certificate for unboundedness
    is a flow tree of exponential height and with all input and output labels
    exponentially bounded.
    An alternating Turing machine
    can thus guess and verify all branches of such a flow tree,
    storing intermediate
    input/output values as well as the remaining length of a branch
    in polynomial space.
    The claim then follows from the fact that
    alternating polynomial space equals exponential time.
    \qed
\end{proof}

\section{Conclusion}
We discussed different boundedness problems for 
pushdown vector addition systems \cite{LPS2014,Laz2013},
which are a known, and very expressive computational model
that features nondeterminism, a pushdown stack and
several counters.
These systems may be equivalently interpreted, in the context of regulated
rewriting \cite{Das1997}, as vector addition systems
with context-free control languages.

We observe that boundedness is reducible to
both counter- and stack-boundedness.
The stack boundedness problem can be shown to be decidable
(with hyper-Ackermannian complexity) by adjusting
the algorithm presented in \cite{LPS2014}.

Here, we single out the special case of the counter-boundedness problem
for one-dimensional systems and propose an exponential-time
algorithm that solves it.
This also improves the best previously known
Ackermannian upper bound
for boundedness in dimension one.

Currently, the best lower bound for this problem is \NP, which can 
be seen by reduction from the subset sum problem.
For dimension two, \PSPACE-hardness
follows by reduction from the state-reachability of bounded one-counter automata
with succinct counter updates \cite{FJ2013}.
For arbitrary dimensions, \TOWER-hardness is known already for the
boundedness problem
\cite{Laz2013,LPS2014}
but the decidability of counter-boundedness for PVAS remains open.

\subsection*{Acknowledgments}
The authors wish to thank M. Praveen for insightful discussions.
We also thank the anonymous referees for their useful comments and suggestions.

\bibliographystyle{splncs03}
\bibliography{references}

\begin{thebibliography}{10}
\providecommand{\url}[1]{\texttt{#1}}
\providecommand{\urlprefix}{URL }

\bibitem{DBLP:conf/pldi/BallMMR01}
Ball, T., Majumdar, R., Millstein, T.D., Rajamani, S.K.: Automatic predicate
  abstraction of {C} programs. In: PLDI. pp. 203--213 (2001)

\bibitem{DBLP:conf/stacs/BouajjaniM99}
Bouajjani, A., Mayr, R.: Model checking lossy vector addition systems. In:
  STACS. pp. 323--333 (1999)

\bibitem{Das1997}
Dassow, J., Păun, G., Salomaa, A.: Grammars with controlled derivations. In:
  Handbook of Formal Languages, pp. 101--154 (1997)

\bibitem{FJ2013}
Fearnley, J., Jurdzinski, M.: Reachability in two-clock timed automata is
  pspace-complete. In: ICALP. pp. 212--223 (2013)

\bibitem{DBLP:journals/jcss/KarpM69}
Karp, R.M., Miller, R.E.: Parallel program schemata. J. Comput. Syst. Sci.
  3(2),  147--195 (1969)

\bibitem{Kos1982}
Kosaraju, S.R.: Decidability of reachability in vector addition systems
  (preliminary version). In: STOC. pp. 267--281 (1982)

\bibitem{Laz2013}
Lazic, R.: The reachability problem for vector addition systems with a stack is
  not elementary. CoRR  abs/1310.1767 (2013)

\bibitem{LPS2014}
Leroux, J., Praveen, M., Sutre, G.: Hyper-ackermannian bounds for pushdown
  vector addition systems. In: CSL/LICS (2014)

\bibitem{LerouxSchmitz:2015:LICS}
Leroux, J., Schmitz, S.: Demystifying reachability in vector addition systems.
  In: LICS (2015)

\bibitem{LerouxSutreTotzke:2015:ICALP}
Leroux, J., Sutre, G., Totzke, P.: On the coverability problem for pushdown
  vector addition systems in one dimension. In: ICALP. pp. 324--336 (2015)

\bibitem{Lip1976}
Lipton, R.J.: The reachability problem requires exponential space. Tech.
  Rep.~63, Yale University (Jan 1976)

\bibitem{May1981}
Mayr, E.W.: An algorithm for the general {Petri} net reachability problem. In:
  STOC. pp. 238--246 (1981)

\bibitem{Rac1978}
Rackoff, C.: The covering and boundedness problems for vector addition systems.
  TCS  6(2),  223--231 (1978)

\end{thebibliography}

\clearpage
\appendix
\section{Missing Proofs}
\subsection{Proofs for \cref{sec:pdvass}}
\label{appendix.pdvass}
\LemBoundednessReduction*
\begin{proof}
  Let us consider a PVAS
  $A$
  and let us introduce a PVAS $A'$ such that $A$ is bounded if, and only if, $A'$ is bounded,
  and such that if $A$ is unbounded then $A'$ is both counter-unbounded
  and stack-unbounded.
  The system $A'$ is a copy of $A$, extended with a new state $\perp$.
  The state $\perp$ has self-loops that allow to pop any symbol from the stack and simultaneously increment
  the first counter. It also has self-loops that decrease any counter and push
  some symbol to the stack.
  Finally, we add a transition
  $(q,\vec{0},\nop,\perp)$
  for each original state $q$.
  Now just observe that $A$ is bounded if, and only if, $A'$ is
  bounded. Moreover if $A$ is unbounded, then $A'$ is both
  counter-unbounded and stack-unbounded.
    \qed
\end{proof}

\setcounter{subsection}{2}
\subsection{Proofs for \cref{sec:model}}
\LemPVAStoGVAS*
\begin{proof}
    Just observe that a PVAS can be interpreted as a pushdown automaton
    that recognizes a context-free and prefix-closed trace language
  $L\subseteq \vec{A}^*$ where $\vec{A}\subseteq\Z$ is the set of vectors
  labeling the transitions. %
  We can
  construct, in logarithmic space, a context-free grammar that produces $L$. This
  context-free grammar (equipped with the initial value $\vec{c}_\init$ of the PVAS)
  is a prefix-closed GVAS $G$. The
  reachability set of $G$ is exactly the set of vectors $\vec{c}$ such
  that $(q,\vec{c},w)$ is a reachable configuration of the PVAS for some $q$ and $w$. The converse
  construction follows a similar idea by observing that the language
  of a prefix-closed context-free grammar can be accepted by a pushdown
  automaton (with all states accepting), computable in logarithmic space.
    \qed
\end{proof}

\label{appendix:normalization}
\begin{lemma}
\label{lem:normalization}
    Let $G=(V,A,R,S,c_{\init})$
    be a GVAS with
    $\max\{|a|: a\in A\} \le n$.
    One can construct, in logspace, an equivalent
    GVAS
    $G'=(V',A',R',S',c_{\init})$
    with the same reachability set
    and such that
    $A'=\{-1,0,1\}$.
\end{lemma}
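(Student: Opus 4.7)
The plan is to replace each terminal $a \in A$ by a short word over a new family of nonterminals that deterministically expand into long runs of $\pm 1$'s, in a way that preserves the step relation block-by-block. Let $N = \max\{|a| : a \in A\}$ and $K = \lceil \log_2(N+1)\rceil$. I introduce $2(K+1)$ fresh nonterminals $P_0,\ldots,P_K, N_0,\ldots,N_K$ with the deterministic productions $P_0 \pstep 1$, $N_0 \pstep -1$, and $P_{i+1} \pstep P_i P_i$, $N_{i+1} \pstep N_i N_i$ for $i < K$, so that $P_i$ derives only the word $1^{2^i}$ and $N_i$ only $(-1)^{2^i}$. For each non-zero $a \in A$ with binary expansion $|a| = \sum_{i \in I_a} 2^i$, let $\phi(a)$ be the concatenation of $P_i$ (when $a > 0$) or $N_i$ (when $a < 0$) over $i \in I_a$; set $\phi(0) = 0$, and extend $\phi$ to $(V \cup A)^*$ by acting as the identity on $V$. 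Finally, let $G' \eqdef (V \cup \{P_i, N_i\}_{i \leq K},\; \{-1,0,1\},\; R',\; S,\; c_\init)$, where $R'$ collects the auxiliary productions together with $(X,\phi(\alpha))$ for every $(X,\alpha) \in R$.

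For correctness I will establish a single per-terminal observation: writing $w_a$ for the unique terminal word that $\phi(a)$ derives in $G'$, one has $c \vstep{a} d$ if and only if $c \vstep{w_a} d$ for all $c, d \in \setN$. The forward direction holds because the prefix sums along $w_a$ are monotone between $c$ and $d$ (increasing when $a > 0$, decreasing when $a < 0$), so they remain non-negative exactly when $c$ and $d$ are; the converse is immediate. Because the auxiliary nonterminals admit only their canonical deterministic expansion, a short induction on derivation length shows that $\lang[^{G'}]{S}$ equals the set of expansions $\psi(z)$ of words $z \in \lang[^G]{S}$, where $\psi$ replaces every terminal $a$ in $z$ by $w_a$. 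Lifting the per-terminal observation by induction on $|z|$ then gives $c_\init \vstep{z} d$ if and only if $c_\init \vstep{\psi(z)} d$, so $G$ and $G'$ have the same reachability set.

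For the logspace bound, the construction is fully streamable. The Turing machine copies the nonterminals of $V$ and enumerates the $O(\log N)$ auxiliary productions using a single counter $i \leq K$ of $O(\log\log N)$ bits. Each rewritten rule $(X, \phi(\alpha))$ is emitted by scanning $\alpha$ once in the input, copying nonterminals verbatim and, for every terminal $a$ encountered, walking the binary expansion of $|a|$ on the input tape while emitting one $P_i$ or $N_i$ per set bit. All auxiliary pointers and counters stay within $O(\log|\text{input}|)$ space. The only non-routine step in the whole plan is the monotonicity observation underlying the per-terminal replacement; everything else is a direct syntactic inspection of the construction and a routine induction on derivations.
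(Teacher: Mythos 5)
Your proof takes essentially the same approach as the paper's: it introduces doubling nonterminals and encodes each terminal $a$ by its binary expansion, so that $a$ is replaced by a (deterministic, monotone) derivation of a word of $\pm 1$'s of total effect $a$. The only difference is cosmetic — the paper routes each terminal through an intermediate nonterminal $X_a$ rather than inlining $\phi(a)$ — and your explicit monotonicity observation is exactly what makes the paper's claim $c \vstep{a} d \Leftrightarrow c \vstep{X_a} d$ go through.
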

\begin{proof}
    $G'$ will be a copy of $G$, extended as follows.
    For all $1\le m\le n$, there are new nonterminals $B_m$ and rules
    $B_1\pstep 1$ and
    $B_m\pstep B_{m-1}B_{m-1}$
    for $m>0$.
    Now all terminals $a\in \vec{A}$ such that $a>1$
    are removed from $A'$ and replaced (on right hand sides of all rules)
    by a new nonterminal $X_a$.
    The only rule that rewrites this symbol is
    \begin{equation}
    X_a\pstep B_n^{b_n}B_{n-1}^{b_{n-1}}\cdots B_1^{b_1}
    \end{equation}
    where $a=b_{n}b_{n-1}\dots b_1$ is the binary representation of $a$.
    Note that $B_{m}^{b_{m}}$ is the empty word if $b_m=0$ and $B_{m}$
    otherwise.
    We thus observe that
    the language $\lang[^{G'}]{X_a}$ contains only
    the word $1^{a}$. In particular,
    for any $c,d\in\N$ we get that
    $c\vstep{a}d$ iff $c\vstep{X_a}d$.

    An analogous construction allows to replace all
    terminals $a\in \vec{A}$ with $a<1$.
    The resulting GVAS $G'$ has terminal alphabet $A'=\{-1,0,1\}$
    as required.
\qed
\end{proof}

\subsection{Proofs for \cref{sec:certificates}}
\LemExistenceFT*
\begin{proof}
  Assume that $c_\init \vstep{S} d$.
  It holds that $c_\init \vstep{z} d$ for some $z \in \lang{S}$.
  Since $z \in \lang{S}$,
  there exists a derivation $S \gstep{*} z$, hence,
  a complete parse tree with root labeled by $S$ and yield $z$.
  This complete parse tree,
  together with the fact that $c_\init \vstep{z} d$,
  induces a flow tree with root $\lnode{\varepsilon}{c_\init}{S}{d}$.
  \qed
\end{proof}

\LemBoundBadFT*
\begin{proof}
  Let $\mathcal{T} = (T, \lsymoperator, \linoperator, \loutoperator)$
  be a flow tree.
  We construct a nested sequence that corresponds to
  a depth-first pre-order traversal of the flow tree.
  Let us introduce, for each symbol $X \in V$,
  two copies $X'$ and $X''$.
  We associate to each node $t \in \setN^*$ a word $\theta(t)$
  over $S \times \setN$,
  inductively defined as follows:
  $$
  \theta(t) \ = \ 
  \begin{cases}
    ((X, m),   \len{t}) \cdot \theta(t0) \cdot
    ((X', m),  \len{t}) \cdot \theta(t1) \cdot
    ((X'', m), \len{t})
    & \text{if} \ t0 \in T\\
    \varepsilon
    & \text{otherwise}
  \end{cases}
  $$
  where $X = \lsym{t}$ and $m = \lin{t}$.
  Recall that the condition $t0 \in T$ means that $t$ is an internal node of $T$.
  It is readily seen that $\theta(\varepsilon)$ is a nested sequence.
  Let us write it as $\theta(\varepsilon) = (s_1,h_1), \ldots, (s_\ell,h_\ell)$.
  Obviously,
  every index $1 \leq i \leq \ell$ can be mapped back to a node $t(i)$ of the
  flow tree $\mathcal{T}$.

  Assume that $\mathcal{T}$ is bad,
  and suppose, towards a contradiction, that $\theta(\varepsilon)$ is good.
  So there exists $i < j$ such that $s_i \preceq s_j$
  and $h_i \leq h_{i+1}, \ldots, h_j$.
  This entails that $t(i)$ is an ancestor of $t(j)$.
  Moreover, $t(i) \neq t(j)$ because each node $t \in T$ is visited three times
  in the sequence $\theta(\varepsilon)$,
  and each visit uses a different copy of $\lsym{t}$.
  So $t(i)$ is a proper ancestor of $t(j)$.
  Since $s_i \preceq s_j$,
  we get that $\lsym{t(i)} = \lsym{t(j)}$
  and
  $\lin{t(i)} \leq \lin{t(j)}$,
  which contradicts our assumption that $\mathcal{T}$ is bad.

  We have shown that the nested sequence $\theta(\varepsilon)$ is bad.
  Let us show that $\theta(\varepsilon)$ is $c_\init$-controlled.
  Let $1 \leq j \leq \ell$.
  Since $A = \{-1,0,1\}$ by assumption,
  it holds that $\lin{t(j)} \leq c_\init + L$ where $L$ denotes the number of
  leaves that are lexicographically smaller than $t(j)$.
  Recall that $G$ is in Chomsky normal form by assumption.
  So these leaves have distinct parents,
  and those are all visited before $t(j)$ in the sequence
  $\theta(\varepsilon)$, hence, $j > L$.
  It follows that $\norm{s_j} = \lin{t(j)} < c_\init + j$.

  Since the nested sequence $\theta(\varepsilon)$ is $(c_\init + 2)$-controlled
  and bad,
  we derive from~\cref{thm:bad-nested-sequences} that
  its length $\ell$ satisfies $\ell \leq F_{\omega \ldotp \card{V}}(c_\init + 2)$.
  The observation that $\card{T} \leq \ell$ concludes the proof.
  \qed
\end{proof}

\ThmCerts*
\begin{proof}
  Assume that the reachability set of $G$ is infinite.
  So there exists $d$ such that
  $c_\init \vstep{S} d$ and $d > c_\init + F_{\omega \ldotp \card{V}}(c_\init+2)$.
  Pick a flow tree
  $\mathcal{T} = (T, \lsymoperator, \linoperator, \loutoperator)$
  with root $\lnode{\varepsilon}{c_\init}{S}{d}$,
  among those of least size.
  Note that such a flow tree exists by \cref{lem:existence-of-flow-trees}.
  Let $z \in A^*$ denote the yield of the complete parse tree
  $(T, \lsymoperator)$.
  It is readily seen that $c_\init \vstep{z} e$ for some $e \geq d$.
\gsnote{State a lemma for that?}
  Recall that $A = \{-1,0,1\}$ by assumption.
  It follows that
  $\len{z} \geq \sum z = e-c_\init > F_{\omega \ldotp \card{V}}(c_\init+2)$.
  Observe that $\card{T} \geq \len{z}$
  since $z$ is the yield of $T$.
  It follows from \cref{lem:bound-on-bad-flow-trees} that
  the flow tree $(T, \lsymoperator, \linoperator, \loutoperator)$ is good.
  So it contains a node $t$ and a proper ancestor $s \pprefix t$ such that
  $\lsym{s} = \lsym{t}$
  and
  $\lin{s} \leq \lin{t}$.
  To prove that $(\mathcal{T},s,t)$ is a certificate for $G$,
  it suffices to show that $\lin{s} < \lin{t}$ or $\lout{t} < \lout{s}$.
  Assume, by contradiction,
  that $\lin{s} = \lin{t}$ and $\lout{t} \geq \lout{s}$.
  We may replace,
  without breaking the flow conditions,
  the subtree rooted in $s$ by the subtree rooted in $t$.
  We may even preserve the input and output of $s$.
  The resulting flow tree also has root $\lnode{\varepsilon}{c_\init}{S}{d}$,
  but it has less nodes than $\mathcal{T}$,
  which contradicts the minimality of $\mathcal{T}$.

  \smallskip

  Conversely,
  assume that there exists a certificate $(\mathcal{T},s,t)$ for $G$,
  with
  $\mathcal{T} = (T, \lsymoperator, \linoperator, \loutoperator)$.
  By definition,
  it holds that
  $s \pprefix t$,
  $
  \lsym{s} = \lsym{t}
  $,
  and either $\lin{s} < \lin{t}$ or
  $\lin{s} = \lin{t}$ and $\lout{t} < \lout{s}$.
  Let $X$ denote the common nonterminal $X = \lsym{s} = \lsym{t}$.
  We decompose the yield $z \in A^*$ of the complete parse tree
  $(T, \lsymoperator)$ into $z = x u w v y$,
  as depicted in \cref{fig:certificates},
  where:
  \begin{itemize}
  \item
    $x$ and $y$ come from the leaves that are lexicographically smaller
    and larger than $s$, respectively,
  \item
    $u$ and $v$ come from the leaves of the subtree rooted in $s$
    that are lexicographically smaller and larger than $t$, respectively,
    and
  \item
    $w$ comes from the leaves of the subtree rooted in $t$.
  \end{itemize}
  It is readily seen that $S \gstep{*} x X y$, $X \gstep{*} u X v$ and $X \gstep{*} w$.
  Since $\lang{S}$ is prefix-closed,
  we get that
  $\{x u^n \mid n \in \setN\} \subseteq \lang{S}$
  and
  $\{x u^n w v^n \mid n \in \setN\} \subseteq \lang{S}$.
  Observe that
  $\lin{t} \leq \lin{s} + \sum u$ and $\lout{s} \leq \lout{t} + \sum v$.
  These two inequalities follow from the flow conditions.
  There are two cases.
\gsnote{The remainder needs more explanations, ideally a lemma to refer to.}
  \begin{enumerate}
  \item
    Either $\lin{s} < \lin{t}$, in which case $\sum u > 0$.
    Since $-\infty < \lin{t}$,
    it holds that $c_\init \vstep{xu} d$ for some $d$.
    We derive that $c_\init \vstep{x} c \vstep{u^n} c + n \cdot \sum u$
    for every $n \in \setN$,
    where $c = c_\init + \sum x$.
    Since $\sum u > 0$,
    we derive that $\{d \mid c_\init \vstep{S} d\}$ is infinite.
  \item
    Or $\lin{s} = \lin{t}$ and $\lout{t} < \lout{s}$,
    in which case $\sum u \geq 0$ and $\sum v > 0$.
    Since $-\infty < \lout{s}$,
    it holds that $c_\init \vstep{xuwv} d$ for some $d$.
    We derive that $c_\init \vstep{xu^nw} c + n \cdot \sum u \vstep{v^n} c + n \cdot \sum uv$
    for every $n \in \setN$,
    where $c = c_\init + \sum xw$.
    Since $\sum uv > 0$,
    we derive that $\{d \mid c_\init \vstep{S} d\}$ is infinite.
  \end{enumerate}
  In both cases, we obtain that the reachability set of $G$ is infinite.
  \qed
\end{proof}

\subsection{Proofs for \cref{sec:struct}}
\LemElementary*
\begin{proof}
  Notice that if $\displ[^G]{}<+\infty$ there exists a complete parse tree with
  a yield $w$ such that $\sum w=\displ[^G]{}$. Since $G$ admits a complete parse
  tree, we can pick
  a complete parse tree $T$ that is minimal wrt.~the number of nodes
  and with a yield $w$ that satisfies
  $\displ[^G]{}\in\{\sum w,+\infty\}$.
  Assume by contradiction that $T$ is not elementary. In that
  case, there exist two distinct nodes $s\pprefix t$ that are labeled by the
  same non-terminal symbol $X\in V$. Notice that if $\displ[^G]{}=+\infty$, by
  collapsing in $T$ the nodes $s$ and $t$, we get a parse tree $T'$ with a yield
  $w'$ that naturally satisfies $\displ[^G]{}\in\{\sum w',+\infty\}$. Thus $T'$
  contradicts the minimality of $T$. We deduce that $\displ[^G]{}<+\infty$ and
  in particular $\sum w=\displ[^G]{}$.
  Let us decompose the yield $w$ as $w=aubvc$, where the subwords $u,v$
  derive from the pumpable path from node $s$ to $t$.
  If $\sum u+\sum v>0$, by inserting many
  copies of this subtree in $T$, we get $\displ[^G]{}=+\infty$, which is
  impossible. It follows that $\sum u+\sum v\leq 0$. By collapsing the nodes $s$
  and $t$, we get a complete parse tree $T'$ for $G$ such that $|T'|<|T|$ with a
  yield $w'$ satisfying $\sum w'=\sum w-(\sum u+\sum v)$.
  From $\sum u+\sum v\leq 0$ we derive $\sum w'\geq \sum w$. As $\sum w=\displ[^G]{}$ and
  $\sum w'\leq \displ[^G]{}$, we derive $\displ[^G]{}=\sum w'$ and we get a
  contradiction on the minimality of $T$. It follows that $T$ is elementary.
  \qed
\end{proof}

\LemDisplInfinite*
\begin{proof}
  Let us first prove that there exists a non-terminal symbol $X$ derivable from the start symbol $S$ and a parse tree for $G[X]$ with a yield $uXv$ satisfying $u,v\in A^*$ and $\sum uv>0$. Since $\displ[^G]{}=+\infty$, there exists a minimal (for the number of nodes) complete parse tree $T$ with a yield $w$ satisfying $\sum w>2^{|V|}$. Observe that if $T$ is elementary then $|w|\leq 2^{|V|}$ and in particular $\sum w\leq 2^{|V|}$ and we get a contradiction. So the tree $T$ is not elementary. Hence there exists $s\pprefix t$ in $T$ with $\lsym{s} = X= \lsym{t}$ for some non-terminal symbol $X$. The subtree of $T$ between $s$ and $t$ provides a parse tree for $G[X]$ with a yield $uXv$ where $u,v\in A^*$. If $\sum uv\leq 0$ by collapsing nodes $s$ and $t$ in $T$, we derive a complete parse tree $T'$ such that $|T'|<|T|$ with a yield $w'$ satisfying $\sum w'+\sum uv=\sum w$. Thus $\sum w'\geq 2^{|V|}$ and we get a contradiction on  the minimality of $|T|$. Thus $\sum uv>0$.

  In the previous paragraph, we have proved that there exists a non-terminal symbol $X$ derivable from the start symbol $S$ and a parse tree $T$ for $G[X]$ with a yield $uXv$ satisfying $u,v\in A^*$ and $\sum uv>0$. Without loss of generality, we can pick $X$ and $T$ is such a way $|T|$ is minimal. Let $t$ be the unique leaf of $T$ labeled by $X$ and assume by contradiction that $|t|>|V|$. In this case, there exists $r\pprefix s\pprefix t$ such that $\lsym{r}=X'=\lsym{s}$ for some non-terminal symbol $X'$. Notice that in that case the subtree between $r$ and $s$ is a parse tree $T'$ for $G[X']$ with a yield $u'X'v'$ where $u',v'\in A^*$. Since $|T'|<|T|$, by minimality of $T$, we get $\sum u'v'\leq 0$. In particular, by collapsing in $T$ the nodes $s$ and $t$ we get a parse tree $T''$ for $G[X]$ with a yield $u''Xv''$ such that $\sum u''v''+\sum u'v'=\sum uv$. Thus $\sum u''v''>0$. We get a contradiction on the minimality of $|T|$ since $|T''|<|T|$. Hence $|t|\leq |V|$. Symmetrically, observe that if there exists $r\pprefix s$ such that $\lsym{r}=X'=\lsym{s}$ for some non-terminal symbol $X'$ and such that $r\not\prefix t$ then we get a contradiction on the minimality of $T$. Therefore $T$ can be decomposed as a branch for the root to $t$ in such a way nodes on this branch, except $t$, have exactly, on the right or on the left an elementary subtree. Thus $|T|\leq |V|+1+|V|2^{|V|+1}$. Since $|V|+1\leq 2^{|V|+1}$, we get $|T|\leq 2^{|V|+1}+|V|2^{|V|+1}=(|V|+1)2^{|V|+1}\leq 4^{|V|+1}$.

  \qed
\end{proof}

\LemDerivable*
\begin{proof}
  Since $X$ is derivable from $S$, there exists a sequence $X_0,\ldots,X_k$ of non-terminal symbols with $X_0=S$, $X_k=X$, $k+1\leq |V|$, and a sequence of production rules $X_{j-1}\pstep \alpha_jX_j\beta_j$ with $\alpha_j\beta_j=Y_j$ for some non-terminal symbol $Y_j$. Notice that there exists a complete elementary parse tree $T_j$ for $G[Y_j]$. The parse trees $T_1,\ldots,T_k$ put along a branch labeled by $X_0,\ldots,X_k$ provide a parse tree for $G$ with a yield in $A^* X A^*$ and a number of nodes bounded by $(k+1)+k2^{|V|+1}\leq |V|+(|V|-1)2^{|V|+1}\leq |V|2^{|V|+1}\leq 4^{|V|+1}$.
  \qed
\end{proof}

\subsection{Proofs for \cref{sec:small_certs}}
\FactBranchDepthBound*
\begin{proof}
  Suppose, towards a contradiction, that $\len{s} > \card{V}$.
  There must exit two nodes $p \pprefix q \pprefix s$
  with $\lsym{p} = \lsym{q}$.
  If $\lin{p} < \lin{q}$ then we get a certificate $(\mathcal{T}', p, q)$
  of strictly smaller rank than $(\mathcal{T}, s, t)$ by setting the input value
  of $t$ to $-\infty$ and propagating onwards.
  If $\lin{p} \geq \lin{q}$ then we may replace
  the subtree rooted in $p$ by the subtree rooted in $q$,
  retaining the flow conditions since $\lout{p} = \lout{q} = -\infty$
  by \cref{fact:dropped-values-1},
  and thus get a certificate
  of strictly smaller rank than $(\mathcal{T}, s, t)$.
  Both cases contradict the minimality of $(\mathcal{T}, s, t)$.
  This concludes the proof that $\len{s} \leq \card{V}$.

  Now suppose, towards a contradiction, that $\len{t} > \len{s} + \card{V} + 1$.
  There exists necessarily two nodes $s \pprefix p \pprefix q \pprefix t$
  with $\lsym{p} = \lsym{q}$.
  If $\lin{p} < \lin{q}$ then we get a certificate $(\mathcal{T}', p, q)$
  of strictly smaller rank than $(\mathcal{T}, s, t)$ by setting the input value
  of $t$ to $-\infty$ and propagating onwards.
  Similarly,
  if $\lin{p} = \lin{q}$ and $\lout{q} < \lout{p}$
  then we get a certificate $(\mathcal{T}', p, q)$
  of strictly smaller rank than $(\mathcal{T}, s, t)$ by setting the output value
  of $s$ to $-\infty$ and propagating onwards.
  If $\lin{p} = \lin{q}$ and $\lout{q} \geq \lout{p}$ then we may replace
  the subtree rooted in $p$ by the subtree rooted in $q$,
  and thus get a certificate
  of strictly smaller rank than $(\mathcal{T}, s, t)$.
  The remaining case is when $\lin{p} > \lin{q}$.
  In that case,
  we collapse the nodes $p \pprefix q$,
  preserve the input value of $p$,
  and relabel all nodes lexicographically larger than $p$
  with the largest input/output values allowed by the flow conditions.
  In the resulting flow tree $\mathcal{T}'$,
  which has a smaller rank than $\mathcal{T}$,
  the node $t'$ originating from $t$ has a strictly larger input value than $s$.
  So $(\mathcal{T}', s, t')$ is a certificate.
  All cases contradict the minimality of $(\mathcal{T}, s, t)$.
  This concludes the proof that $\len{t} \leq \len{s} + \card{V} + 1$.
 \qed
\end{proof}

\FactTraversalInequation*
\begin{proof}
  Suppose, towards a contradiction,
  that $\lin{p} > \lout{p} + 2^{\card{V}}$ for some node $p \in T$
  with $p \not\prefix t$.
  Recall that $A = \{-1,0,1\}$ by assumption.
  If the subtree rooted in $p$ has at most $2^{\card{V}}$ leaves,
  then we may decrease the input and output values of its nodes,
  retaining the flow conditions,
  so that the output value of $p$ is preserved and its new input value
  is at most $\lout{p} + 2^{\card{V}}$.
  Notice that this does not modify the main branch since $p \not\prefix t$.
  Thus,
  we get a certificate of strictly smaller rank than $(\mathcal{T}, s, t)$.
  Otherwise,
  the subtree rooted in $p$ has at least $2^{\card{V}} + 1$ leaves.
  Hence,
  it is not elementary
  and we may reduce it into a strictly smaller,
  elementary subtree with at most $2^{\card{V}}$ leaves.
  The latter induces a complete flow tree
  with the same input and output values for $p$.
  Again,
  this does not modify the main branch since $p \not\prefix t$.
  Thus,
  we get a certificate of strictly smaller rank than $(\mathcal{T}, s, t)$.
  \qed
\end{proof}

\FactDescentOutsideMB*
\begin{proof}
  Assume that $p = t$ or $p \not\prefix t$.
  Observe that the children of $p$ are not on the main branch,
  i.e.,
  none of them is a prefix of $t$.
  If $q$ is the last child of $p$,
  then $\lout{q} = \lout{p}$ by minimality of $(\mathcal{T}, s, t)$.
  Otherwise,
  $q = p0$ and $p1$ is the last child of $p$.
  It holds that $\lout{p0} = \lin{p1}$ and $\lout{p1} = \lout{p}$
  by minimality of $(\mathcal{T}, s, t)$.
  We derive from \cref{fact:traversal-inequation} that
  $\lout{q} \leq \lout{p} + 2^{\card{V}}$.

  Now assume, in addition, that $\lsym{p} = \lsym{q}$.
  Suppose, towards a contradiction, that $\lout{q} \geq \lout{p}$.
  If $\lin{p} < \lin{q}$ then we get a certificate $(\mathcal{T}', p, q)$
  of strictly smaller rank than $(\mathcal{T}, s, t)$ by setting the input value
  of $t$ to $-\infty$ and propagating onwards.
  If $\lin{p} \geq \lin{q}$ then we may replace,
  retaining the flow conditions since $\lout{q} \geq \lout{p}$,
  the subtree rooted in $p$ by the subtree rooted in $q$,
  and thus get a certificate
  of strictly smaller rank than $(\mathcal{T}, s, t)$.
  Both cases contradict the minimality of $(\mathcal{T}, s, t)$.
  It follows that $\lout{q} < \lout{p}$,
  which concludes the proof of the fact.
  \qed
\end{proof}

\FactBranchValuesBoundOne*
\begin{proof}
  If $\lin{s} < \lin{t}$ then $\lout{s} = \lout{t} = -\infty$
  by \cref{fact:dropped-values-2},
  so the equality $\lout{s} = \lout{t} + 1$ holds.
  Otherwise, $\lout{t} < \lout{s}$.
  If we had $\lout{t} + 1 < \lout{s}$ then we could
  decrease the output value of $s$ by one,
  retaining the flow conditions by \cref{fact:dropped-values-1},
  and get a certificate
  of strictly smaller rank than $(\mathcal{T}, s, t)$,
  contradicting the minimality of $(\mathcal{T}, s, t)$.
  Therefore we get that
  \begin{equation}
  \label{eq:fact:branch-values-bound-1}
  \lout{s} = \lout{t} + 1
  \end{equation}
  and in particular the first inequality of the claim.

  Let us now prove that $\lout{s} \leq K$,
  where $K \eqdef 3 (\card{V} + 1) 4^{\card{V}+1}$.
  Suppose, towards a contradiction, that $\lout{s} > K$.
  Observe that $\lout{t} \geq K$ because of
  \cref{eq:fact:branch-values-bound-1}.
  Let us consider the subtrees on the right of the branch from $s$ to $t$.
  The main idea of the proof is to replace these subtrees by smaller ones
  using \cref{thm:derivewitness}.
  Formally,
  let $U = \{p1 \in T \mid s \prefix p \pprefix t \wedge p1 \not\prefix t\}$.
  The set $U$ collects the right-children of the main branch from $s$ to the parent of $t$,
\gsnote{Introduce earlier in this section the ``left'' and ``right'' notions that we use in informal explanations?}
  excluding those that are on the branch themselves.
  Let $u_1, \ldots, u_k$ denote the elements of $U$,
  in lexicographic order,
  and let $S_i = \lsym{u_i}$ for $i = 1, \ldots, k$.
  Note that $S_i \in V$ since $G$ is in Chomsky normal form by assumption.
  Observe also that $\lout{s} \leq \lout{t} + \displ{\#_1 \cdots \#_k}$
\gsnote{This uses displacements $\displ[^G]{w}$ which are not defined (only $\displ[^G]{}$).}
  due to the flow conditions.
  It follows from $\lout{s} = \lout{t} + 1$ that $\displ{\#_1 \cdots \#_k} > 0$.

  If the total size of the subtrees rooted in $u_1, \ldots, u_k$ is
  at most $K$,
  then the flow conditions entail that
  the input and output values of their nodes are all strictly positive,
  since $\lout{s} > K$ and $A = \{-1,0,1\}$ by assumption.
  The same holds for the output values of the nodes $p$ with
  $s \prefix p \prefix t$.
  So we may decrease all these values by one,
  retaining the flow conditions.
  Indeed,
  \cref{fact:dropped-values-1} guarantees that the first flow condition
  still holds for the parent of $s$.
  We obtain, in this way,
  a certificate of strictly smaller rank than $(\mathcal{T}, s, t)$.

  Otherwise,
  the total size of the subtrees rooted in $u_1, \ldots, u_k$ is
  at least $K + 1$.
  Observe that $k \leq \card{V} + 1$ by \cref{fact:branch-depth-bound}.
  According to \cref{thm:derivewitness},
  there exists $T_1, \ldots, T_k$,
  where each $T_i$ is a complete parse tree for $G[S_i]$ with
  yield $z_i$,
  such that $\card{T_1} + \cdots + \card{T_k} \leq 3 k 4^{\card{V}+1} \leq K$
  and $\sum z_1\ldots z_k>0$.
  Let us replace the subtrees rooted in $u_1, \ldots, u_k$
  by $T_1, \ldots, T_k$, respectively.
  Since $\lout{t} \geq K$,
  this induces a complete flow tree
  $\mathcal{T}' = (T', \lsymoperator['], \linoperator['], \loutoperator['])$,
  with $\lout[']{t} = \lout{t}$, and
  satisfying
  $\lout[']{s} = \lout[']{t} + \sum z_1\ldots z_k > \lout[']{t}$.
  The new output value of $s$ might be smaller,
  but \cref{fact:dropped-values-1} guarantees that the first flow condition
  still holds for the parent of $s$.
  The input values of $s$ and $t$ were not changed,
  so we have $\lin[']{s} = \lin{s} \leq \lin{t} = \lin[']{t}$.
  Therefore,
  $(\mathcal{T}', s, t)$ is a certificate
  of strictly smaller rank than $(\mathcal{T}, s, t)$.

  In both cases,
  we obtain a contradiction with the minimality of $(\mathcal{T}, s, t)$.
  The observation that $K \leq 6 \card{V} \cdot 4^{\card{V}+1}$
  concludes the proof of the fact.
  \qed
\end{proof}

\FactBranchValuesBoundThree*
\begin{proof}
  We first show that $\lin{s} \leq \lin{t} \leq \lin{s} + 1$.
  Recall that $\lin{s} \leq \lin{t}$ by definition of certificates.
  If we had $\lin{s} + 1 < \lin{t}$ then we could
  decrease the input value of $t$ by one,
  retaining the flow conditions by \cref{fact:dropped-values-2},
  and get a certificate
  of strictly smaller rank than $(\mathcal{T}, s, t)$,
  contradicting the minimality of $(\mathcal{T}, s, t)$.
  Therefore, $\lin{t} \leq \lin{s} + 1$.

  Observe that $t$ is an internal node since $\lsym{s} = \lsym{t}$
  cannot be in $A \cup \{\varepsilon\}$.
  Let us bound the input value of its first child.
  According to \cref{fact:traversal-inequation},
  it holds that $\lin{tj} \leq \lout{tj} + 2^{\card{V}}$
  for each child $tj$ of $t$.
  Let $k \in \{1, 2\}$ denote the number of children of $t$.
  The flow conditions together with
  the minimality of $(\mathcal{T}, s, t)$ guarantee that
  $\lin{t0} = \lin{t}$,
  $\lout{t} = \lout{t(k-1)}$, and
  $\lin{t(j+1)} = \lout{tj}$ for every $j = 0, \ldots, k-1$.
  We derive that $\lin{t0} \leq \lout{t} + 2 \cdot 2^{\card{V}}$.
  It follows from \cref{fact:branch-values-bound-1} that
  $\lin{t0} \leq K + 2^{\card{V} + 1}$,
  where $K \eqdef 6 \card{V} \cdot 4^{\card{V}+1}$.

  Let us now prove that $\lin{t} \leq H$,
  where $H \eqdef K + 2^{\card{V} + 1}$.
  The proof is similar to the proof of \cref{fact:branch-values-bound-1}.
  Suppose, towards a contradiction, that $\lin{t} > H$.
  Observe that $\lin{s} \geq H$.
  Let us consider the subtrees on the left of the branch from $s$ to $t$.
  The main idea of the proof is to replace these subtrees by smaller ones
  using \cref{thm:derivewitness}.
  Formally,
  let $U = \{p0 \in T \mid s \prefix p \pprefix t \wedge p0 \not\prefix t\}$.
  The set $U$ collects the left-children of the main branch from $s$ to the parent of $t$,
  excluding those that are on the branch themselves.
  Let $u_1, \ldots, u_k$ denote the elements of $U$,
  in lexicographic order,
  and let $S_i = \lsym{u_i}$ for $i = 1, \ldots, k$.
  Note that $S_i \in V$ since $G$ is in Chomsky normal form by assumption.
  Observe also that $\lin{t} \leq \lin{s} + \displ{\#_1 \cdots \#_k}$.
\gsnote{This uses displacements $\displ[^G]{w}$ which are not defined (only $\displ[^G]{}$).}
  It follows that
  $\displ{\#_1 \cdots \#_k} \geq 0$ and that
  $\displ{\#_1 \cdots \#_k} > 0$ if $\lin{s} < \lin{t}$.

  If the total size of the subtrees rooted in $u_1, \ldots, u_k$ is
  at most $K$,
  then the flow conditions entail that
  the input and input values of their nodes are all strictly positive,
  since $\lin{t} > H \geq K$ and $A = \{-1,0,1\}$ by assumption.
  The same holds for the input values of the nodes $p$ with
  $s \prefix p \prefix t$.
  So we may decrease all these values by one,
  retaining the flow conditions.
  Indeed,
  the first flow condition still holds for $t$
  since $\lin{t0} \leq H$.
  We obtain, in this way,
  a certificate of strictly smaller rank than $(\mathcal{T}, s, t)$.

  Otherwise,
  the total size of the subtrees rooted in $u_1, \ldots, u_k$ is
  at least $K + 1$.
  Observe that $k \leq \card{V} + 1$ by \cref{fact:branch-depth-bound}.
  According to \cref{thm:derivewitness},
  there exists $T_1, \ldots, T_k$,
  where each $T_i$ is a complete parse tree for $G[S_i]$ with
  yield $z_i$,
  such that $\card{T_1} + \cdots + \card{T_k} \leq 3 k 4^{\card{V}+1} \leq K$
  and $\sum z_1\ldots z_k \geq 0$.
  Moreover $\sum z_1\ldots z_k = 0$ only if $\lin{s} = \lin{t}$.
  Let us replace the subtrees rooted in $u_1, \ldots, u_k$
  by $T_1, \ldots, T_k$, respectively.
  Since $\lin{s} \geq H \geq K$,
  this induces a complete flow tree
  $\mathcal{T}' = (T', \lsymoperator['], \linoperator['], \loutoperator['])$,
  with $\lin[']{s} = \lin{s}$, and
  satisfying
  $\lin[']{t} = \lin[']{s} + \sum z_1\ldots z_k \geq \lin[']{s}$.
  The first flow condition still holds for $t$ since
  $\lin[']{t0} = \lin{t0} \leq H \leq \lin{s} = \lin[']{s} \leq \lin[']{t}$.
  The output values of $s$ and $t$ were not changed.
  It follows that $\lin[']{s} < \lin[']{t}$ or $\lout[']{t} < \lout[']{s}$.
  Indeed,
  if $\lin[']{s} = \lin[']{t}$ then
  $\sum z_1\ldots z_k = 0$, hence, $\lin{s} = \lin{t}$,
  which entails that $\lout[']{t} = \lout{t} < \lout{s} = \lout[']{s}$.
  Therefore,
  $(\mathcal{T}', s, t)$ is a certificate
  of strictly smaller rank than $(\mathcal{T}, s, t)$.

  In both cases,
  we obtain a contradiction with the minimality of $(\mathcal{T}, s, t)$.
  The observation that $H \leq 7 \card{V} \cdot 4^{\card{V}+1}$
  concludes the proof of the fact.
  \qed
\end{proof}

\FactBranchValuesBoundTwo*
\begin{proof}
  Let us write $t = s j_1 \cdots j_k$ where each $j_i \in \{0, 1\}$,
  and let $p_i = s j_1 \cdots j_i$ for $i = 0, \ldots, k$.
  We first show that for every $0 < i \leq k$,
  \begin{equation}
  \label{eq:branch-values-bound-2}
  \lout{p_i} \leq \lout{p_{i-1}} + 2^{\card{V}}.
  \end{equation}
  Indeed,
  the flow conditions together with
  the minimality of $(\mathcal{T}, s, t)$ guarantee,
  for every ancestor $s \prefix p \pprefix t$,
  that $\lout{p} = \lout{p1}$ and
  that $\lout{p0} = \lin{p1}$ if $p0 \prefix t$.
  In the latter case,
  $\lin{p1} \leq \lout{p1} + 2^{\card{V}}$
  by \cref{fact:traversal-inequation},
  hence,
  $\lout{p0} \leq \lout{p} + 2^{\card{V}}$.
  We have thus shown that
  $\lout{pj} \leq \lout{p} + 2^{\card{V}}$
  for every ancestor $s \prefix p \pprefix t$
  and every $j \in \{0, 1\}$
  such that $pj \prefix t$.

  We derive from \cref{eq:branch-values-bound-2} that
  $\lout{p_i} \leq \lout{s} + i 2^{\card{V}}$
  for all $0 < i < k$.
  Recall that $k \leq \card{V} + 1$ by \cref{fact:branch-depth-bound}.
  It follows from \cref{fact:branch-values-bound-1}
  that for every node $p$ such that $s \prefix p \prefix t$,
  we have
  \begin{equation*}
  \lout{p} \leq \lout{s} + \card{V}\cdot2^{\card{V}}
  \leq 6 \card{V} \cdot 4^{\card{V}+1} + \card{V}\cdot 2^{\card{V}}
  \leq 4^{2(\card{V}+1)}
  \end{equation*}
  According to \cref{fact:dropped-values-1},
  it holds that $\lout{p} = -\infty$ for every proper ancestor $p \pprefix s$,
  which concludes the proof of the fact.
  \qed
\end{proof}

\FactBranchValuesBoundFour*
\begin{proof}
  Let us write $t = j_1 \cdots j_k$ where each $j_i \in \{0, 1\}$,
  and let $p_i = j_1 \cdots j_i$ for $i = 1, \ldots, k$.
  We claim that
  $\lin{p_{i-1}} \leq \lin{p_i} + 2^{\card{V}}$
  for every $1 < i \leq k$.
  Indeed,
  the flow conditions together with
  the minimality of $(\mathcal{T}, s, t)$ guarantee,
  for every ancestor $\varepsilon \pprefix p \pprefix t$,
  that $\lin{p} = \lin{p0}$ and
  that $\lout{p0} = \lin{p1}$ if $p1 \prefix t$.
  In the latter case,
  $\lin{p0} \leq \lout{p0} + 2^{\card{V}}$
  by \cref{fact:traversal-inequation},
  hence,
  $\lin{p} \leq \lin{p1} + 2^{\card{V}}$.
  We have thus shown that
  $\lin{p} \leq \lin{pj} + 2^{\card{V}}$
  for every ancestor $\varepsilon \pprefix p \pprefix t$
  and every $j \in \{0, 1\}$
  such that $pj \prefix t$.
  This concludes the proof of the claim.

  Observe that $s = j_1 \cdots j_h$ where $h = \len{s}$.
  We derive from the claim that, firstly,
  $\lin{p_i} \leq \lin{s} + (h-i) 2^{\card{V}}$
  for all $0 < i < h$,
  and secondly,
  $\lin{p_i} \leq \lin{t} + (k-i) 2^{\card{V}}$
  for all $h < i < k$.
  Recall that $h \leq \card{V}$ and $(k-h) \leq \card{V} + 1$ by \cref{fact:branch-depth-bound}.
  It follows from \cref{fact:branch-values-bound-3}
  that,
  for every ancestor $\varepsilon \pprefix p \prefix t$,
  we have
  \begin{equation*}
  \lin{p}
  \leq \max\{\lin{s}, \lin{t}\} + \card{V} 2^{\card{V}}
  \leq 7 \card{V} \cdot 4^{\card{V}+1} + \card{V} 2^{\card{V}}
  \leq 4^{2(\card{V}+1)}
  \end{equation*}
  which concludes the proof of the fact.
  \qed
\end{proof}

\end{document}